\documentclass[twocolumn,english,aps,reprint, superscriptaddress,showpacs,showkeys]{revtex4-2}
\usepackage{amsmath,amssymb,bbm,mathrsfs,bm,braket,color,graphicx,comment,todonotes, tikz, booktabs, quantikz}
\usepackage[colorlinks,citecolor=blue,urlcolor=blue,linkcolor = blue]{hyperref}
\usepackage{soul}
\usepackage[mathscr]{euscript}
\usepackage{hyperref}
\usepackage{tikz}
\usepackage{xcolor}
\usepackage{multirow}

\usepackage{enumerate}

\usepackage{amsthm}
\newtheorem{thm}{Theorem}
\newtheorem{lem}{Lemma}
\theoremstyle{definition}
\newtheorem{defi}{Definition}

\newcommand{\neigh}[1]{\mathcal{N}(#1)}
\newcommand{\nodd}[1]{\mathsf{Odd}(#1)}
\newcommand{\cz}{\mathrm{CZ}}
\newcommand{\ident}{\mathbb{I}}
\newcommand{\avg}[1]{\langle #1 \rangle}

\begin{document}
\title{YZ-plane measurement-based quantum computation: Universality and Parity Architecture implementation}

\author{Jaroslav Kysela}
\affiliation{Parity Quantum Computing GmbH, A-6020 Innsbruck, Austria}

\author{Katharina Ludwig}
\affiliation{Parity Quantum Computing Germany GmbH, D-20095 Hamburg, Germany}

\author{Nitica Sakharwade}
\affiliation{Parity Quantum Computing GmbH, A-6020 Innsbruck, Austria}

\author{Anette Messinger}
\affiliation{Parity Quantum Computing GmbH, A-6020 Innsbruck, Austria}

\author{Wolfgang Lechner}
\affiliation{Parity Quantum Computing GmbH, A-6020 Innsbruck, Austria}
\affiliation{Parity Quantum Computing Germany GmbH, D-20095 Hamburg, Germany}
\affiliation{Institute for Theoretical Physics, University of Innsbruck, A-6020 Innsbruck, Austria}

\date{\today}

\begin{abstract}
We define the class of register-logic graphs and prove that any uniformly deterministic measurement-based quantum computation (MBQC) where the inputs coincide with the outputs must be driven on such graphs by measurements in the $YZ$ plane of the Bloch sphere. This observation is revisited in the context that goes beyond uniform determinism, where we present a universal $YZ$-plane-only measurement pattern and establish a connection between $YZ$-plane-only and $XZ$-plane-only patterns. These results conclude the line of research on universal patterns with measurements restricted to one of the principal planes of the Bloch sphere. We further demonstrate, within the framework of the Parity Architecture, that $YZ$-plane patterns with the register-logic graph can be embedded into another graph with purely local interactions, and we extend this case to the scenario of universal quantum computation. 
\end{abstract}

\maketitle

\section{Introduction}

Measurement-based quantum computation (MBQC), as put forward in the works of Raussendorf, Briegel, and collaborators \cite{raussendorf_one-way_2001,raussendorf_computational_2002}, is a paradigm of quantum computation where the execution of a given algorithm is driven by single-qubit measurements performed in a prescribed order on a large entangled quantum state, referred to as a graph state. Since its inception, much effort has been put into exploring its connections to the established gate-based picture of quantum computation \cite{simmons_relating_2021} as well as in studying its intrinsic properties such as the conditions under which a given measurement-based computation leads to deterministic results \cite{browne_generalized_2007,mhalla_shadow_2025}. The measurements are not only the driving force of the measurement-based computation, but also play a prominent role in various techniques of fault-tolerant quantum computing that lie outside the scope of the MBQC formalism \cite{bombin_quantum_2009}. One such technique based on measurements was shown to lead to significant savings in resources when applied to the Parity Architecture \cite{messinger_constant_2023}. In this architecture, the parity of qubits is addressed by local operations to design hardware-friendly quantum algorithms by eliminating the need for long-range interactions \cite{lechner_quantum_2015,dreier_connectivity-aware_2025}. Its resource-efficient modification of Ref.~\cite{messinger_constant_2023} was later proven to be equivalent to a class of measurement-based patterns \cite{smith_parity_2024} where all measurements are restricted to the $YZ$-plane of the Bloch sphere. These patterns were further studied from the purely MBQC perspective and their minimum possible implementation was identified at the cost of introducing long-range interactions, rendering the resulting algorithm less amenable to hardware platforms \cite{smith_minimally_2025}. 

Inherent in the measurement-based implementation of any algorithm is the indeterminacy of quantum measurements. For the implementation to give deterministic results, multiple constraints must be imposed on the underlying graph state as well as the form of measurements \cite{browne_generalized_2007,mhalla_shadow_2025}. The latter is restricted to lie in the principal planes of the Bloch sphere, while the special cases of the Pauli bases that lie in the intersection of two planes are also allowed. In the original MBQC proposal \cite{raussendorf_one-way_2001,raussendorf_computational_2002}, many qubits are measured in the eigenbasis of the Pauli $Z$ operator and the rest of qubits in bases that lie in the $XY$-plane of the Bloch sphere. It was shown later on that given a graph state whose underlying graph is a regular rectangular grid, a so-called cluster state, the $XY$-plane measurements alone are sufficient for achieving universal measurement-based quantum computation \cite{mantri_universality_2017}. In a similar vein, the triangular-grid graph state was shown to be sufficient for real-valued universal MBQC where all measurements are restricted to the $XZ$-plane. The last possibility, namely when all qubits are measured in the $YZ$-plane, was more closely considered only recently \cite{smith_parity_2024}. One of the consequences of that work was that in such a case no universal computation is possible since the underlying graph has to be of a very restricted form.

Here, we further expand on the discussion of $YZ$-plane-only measurement-based computation and its connection to parity computing. After introducing the necessary terminology in Sec.~\ref{sec:terminology} we improve in Sec.~\ref{sec:yz_mbqc_with_gflow} the results of Ref.~\cite{smith_parity_2024}. We prove that measuring qubits in the $YZ$-plane is a direct consequence of the natural requirement that the quantum computation is uniformly deterministic, provided that the input qubits are assumed to coincide with the outputs. Then in Sec.~\ref{sec:parity_architecture} we present a more systematic exposition of the Parity Architecture in the language of measurement-based quantum computing \cite{messinger_constant_2023,smith_parity_2024} with the focus on eliminating long-range interactions. This allows embedding the $YZ$-plane patterns into graphs with a rectangular grid. Finally, in Sec.~\ref{sec:univ_yz} we demonstrate that universal quantum computation is indeed possible for MBQC where all measurements are restricted to the $YZ$-plane. For that, a more relaxed notion of determinism than the one in Ref.~\cite{smith_parity_2024} is used and two different types of universality are considered. For strict universality the universal pattern is presented explicitly, while for computational universality the issue of the $YZ$-plane MBQC is reduced to the resolved issue for the $XZ$-plane MBQC \cite{mhalla_graph_2012}. The work concludes in Sec.~\ref{sec:conclusion}. 

\section{Preliminaries}
\label{sec:terminology}

A given MBQC algorithm is determined by the initial graph state, defined by the underlying graph $G$ with specified sets $I$ and $O$ of input and output qubits together with a sequence of single-qubit measurements, where each measurement is specified by the principal plane ($XY$, $XZ$, or $YZ$) or the principal axis ($X$, $Y$, or $Z$) of the Bloch sphere in which the measurement basis lies. This specification is captured by the function $\lambda$ that to each non-output qubit assigns the corresponding measurement plane or axis. For convenience, by $\lambda \equiv YZ$ we denote the constant function that to each non-output assigns the $YZ$-plane. The triple $(G, I, O)$ is usually referred to as an \emph{open graph}, and the quadruple $(G, I, O, \lambda)$ as a \emph{labeled open graph}. These notions abstract away from the particular angle $\theta$ that a general measurement basis forms with one of the principal axes. As in Ref.~\cite{backens_there_2021} we always assume that each connected component of the open graph contains at least one input or output. Any connected component without inputs and outputs would contribute only an unimportant scalar factor. 

For measurements the following notation is used: the measurement in a basis that lies in the $YZ$-plane and forms the angle of $\theta$ with the $Z$ axis is denoted by $M_{YZ}(\theta)$; the rotation through $\theta$ about the $X$ axis is denoted by $R_X(\theta)$; and the measurement in the eigenbasis of the Pauli $Y$ operator, or $Y$ measurement for short, is denoted by $M_Y$. For other bases the notation extends in a natural way. The entirety of an MBQC algorithm that includes not only the aforementioned objects but also the correction operators is referred to as a \emph{measurement pattern}. Further details on the basics of MBQC together with relevant theoretical results are summarized in Sec.~\ref{sec:prelims}.

\section{YZ-plane MBQC with gflow}
\label{sec:yz_mbqc_with_gflow}

The deterministic execution of a given algorithm when implemented by a measurement pattern is possible as long as the latter supports some kind of correction strategy that handles undesired measurement outcomes. A prominent type of such strategies ensures so-called \emph{uniform determinism}, where the computation remains deterministic irrespective of the rotation angle of the measurement basis of any qubit. This type of strategies is captured by the mathematical structure known as \emph{gflow} \cite{browne_generalized_2007}. It follows directly from the definition of gflow that any input qubit that is not simultaneously an output must be measured in the $XY$-plane \cite{hamrit_reversibility_2015, backens_there_2021}. As our focus is on the patterns with $YZ$-plane measurements, one obtains the following corollary:
\begin{lem}
    \label{thm:inputs_gflow}
    Any labeled open graph $(G, I, O, \lambda)$ with gflow and $\lambda \equiv YZ$ must satisfy $I \subseteq O$.
\end{lem}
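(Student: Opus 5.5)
The plan is to argue by contradiction directly from the definition of gflow, without any machinery beyond it. Recall that a gflow for $(G,I,O,\lambda)$ is a pair $(g,\prec)$, where $g$ assigns to each non-output vertex $v \in O^c$ a correction set $g(v) \subseteq I^c$ and $\prec$ is a strict partial order on the vertices. Among the conditions, the one that matters here is the plane condition: if $\lambda(v) = YZ$, then $v \in g(v)$ and $v \notin \nodd{g(v)}$. The point to isolate is that correction sets may never contain input vertices, that is, $g(v) \subseteq I^c$. This is the standard formulation of \cite{browne_generalized_2007,backens_there_2021}.

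The argument then takes one step. Suppose $I \not\subseteq O$, and pick $v \in I \setminus O$. Since $v$ is a non-output, it is measured, and because $\lambda \equiv YZ$ we have $\lambda(v) = YZ$. The $YZ$ condition forces $v \in g(v)$. However, $g(v) \subseteq I^c$ and $v \in I$, so $v \notin g(v)$, which is a contradiction. Hence every input is an output. Equivalently, we could cite the cited fact that inputs which are not outputs must be $XY$-measured \cite{hamrit_reversibility_2015, backens_there_2021}, and observe that $YZ \neq XY$. I would present the direct argument and mention the citation as the source of the observation.

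The main thing to be careful about is the convention. Some definitions of gflow allow Pauli labels, or treat $\lambda$ only on $O^c$, so I would state explicitly that $\lambda$ is defined on $O^c$ and that $\lambda \equiv YZ$ means every measured vertex carries the plane label $YZ$, not a Pauli axis. With that convention the condition $v \in g(v)$ applies unambiguously. Nothing else in the argument is delicate: the order conditions on $\prec$ and the conditions on $\nodd{g(v)}$ play no role.
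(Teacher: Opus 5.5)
Your proof is correct and is essentially the paper's argument. The paper presents the lemma as a corollary of the cited fact that inputs which are not outputs must be $XY$-measured. That fact rests on exactly your observation: $g$ maps into $\mathcal{P}(\Bar{I})$, while condition~\ref{itm:lam_yz}) demands $u \in g(u)$. The paper spells out the same reasoning in its proof of Lemma~\ref{thm:inputs_pauli_flow}.
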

In contrast to the usually considered measurement patterns with distinct inputs and outputs, this statement shows that the computation driven by $YZ$-plane measurements corresponds to the degenerate case where all inputs are simultaneously outputs and remain unmeasured. When $I = O$, the output qubits can therefore be seen as a quantum register that carries the processed information, while the measurements of the non-output qubits represent the quantum logic applied to this information. Moreover, for such a computation to be uniformly deterministic, there must be no edges between the non-output qubits \cite{smith_parity_2024}. These observations motivate the following definition that applies also outside the scope of $YZ$-plane MBQC:
\begin{defi}
    An open graph $(G, I, O)$ is of the \emph{register-logic form} (or \emph{RL form} for short) if the subgraph of $G$ induced by the non-output qubits $\Bar{O}$ has no edges. If also the subgraph of $G$ induced by $O$ has no edges, then the open graph is of the \emph{bipartite-register-logic form} (or \emph{bRL form} for short). Such open graphs are also called \emph{RL graphs} and \emph{bRL graphs}, respectively.
\end{defi}
In Fig.~\ref{fig:rl_state}(a) an example of an open graph of RL form is shown. Note that a bRL graph is bipartite with its partition given by $\Bar{O}$ and $O$, cf.~Fig.~\ref{fig:rl_state}(b). As a side note, the \emph{local complementation} \cite{hein_entanglement_2006} applied to any non-output qubit of an RL graph results again in an RL graph. When instead applied to an output qubit, the result is an RL graph if and only if the qubit has at most one neighbor among the non-output qubits. The local complementation is a graph operation that turns out to be equivalent to a specific local Clifford gate acting on the corresponding graph state. Notably, the $Y$ measurement of a given qubit can be described like its local complementation together with extra $S$ and $Z$ gates acting on its neighbors \cite{hein_entanglement_2006}, cf. also Sec.~\ref{sec:comp_univ}.

\begin{figure}
    \centering
    \includegraphics[width=\linewidth]{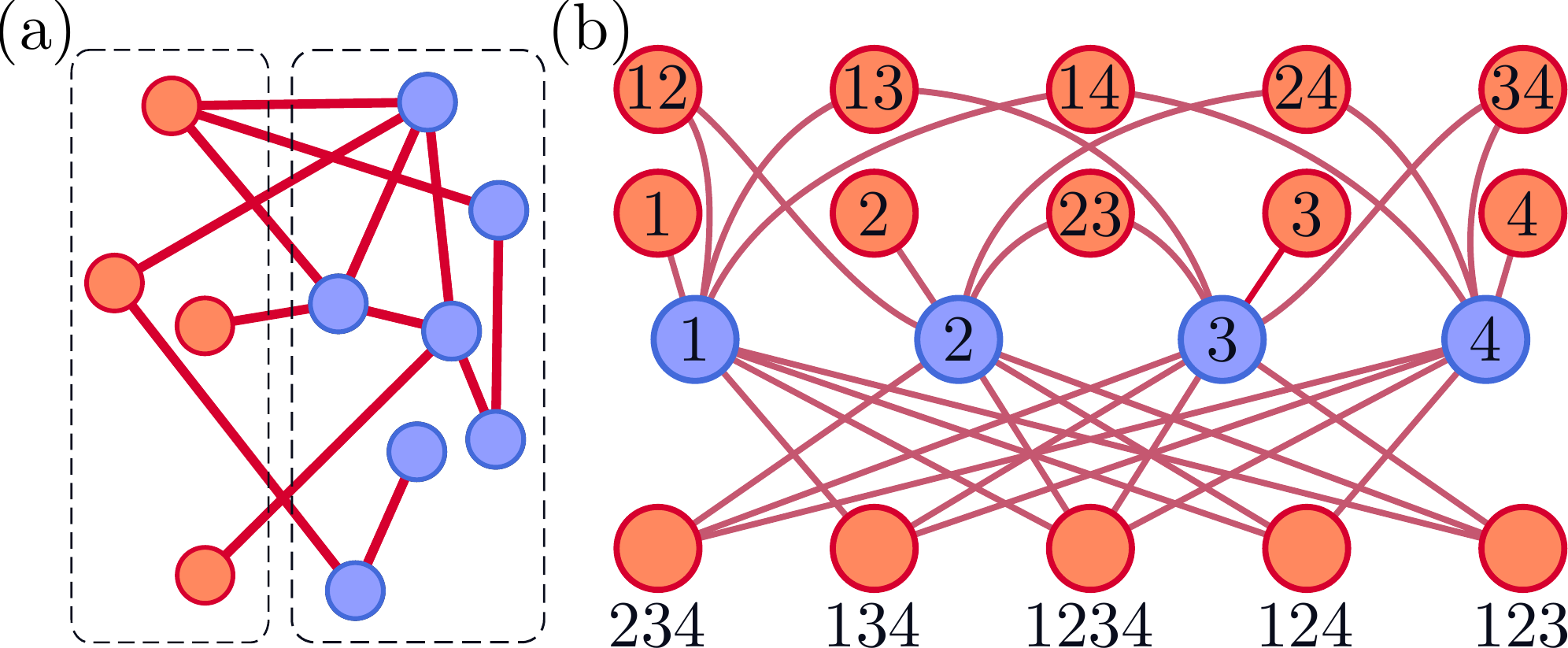}
    \caption{Open graph $(G, I, O)$ of the register-logic form. %
    (a) General RL graph, where $O$ is formed by the nodes in the right box. If there were no edges between nodes in the right box, the open graph would be of the bipartite-register-logic form. %
    (b) bRL graph that allows to implement an arbitrary unitary on four qubits that is diagonal in the $Z$ basis. All orange (dark gray) nodes are measured in the $YZ$-plane and correspond to parity qubits, where the numbers refer to the given parity for each qubit, see Sec.~\ref{sec:parity_architecture}.}
    \label{fig:rl_state}
\end{figure}

As suggested by the above arguments, four important features of measurement patterns with open graph $(G, I, O)$ can be identified that are relevant in the context of $YZ$-plane measurements:
\begin{description}
    \item[$\mathbf{I \subseteq O}$] The sets of inputs and outputs satisfy $I \subseteq O$.
    \item[$\mathbf{\pmb{\lambda} \equiv YZ}$] All non-outputs are measured in the $YZ$-plane.
    \item[RL] The open graph is of the register-logic form.
    \item[gflow] The labeled open graph $(G, I, O, \lambda)$ has gflow.
\end{description}
It is easy to convince oneself that any of the four features alone does not imply any of the others. One can consider a pair of features and study if they imply any of the other two. It turns out that there are exactly two such cases, captured by Lemma~\ref{thm:inputs_gflow} and the following theorem:
\begin{thm}
    \label{thm:io_g_to_rl_yz}
    Let $(G, I, O, \lambda)$ be a labeled open graph with $I = O$ and with gflow. Then $\lambda \equiv YZ$ and $(G, I, O)$ is of the register-logic form.
\end{thm}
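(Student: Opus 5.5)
The plan is to argue directly from the definition of gflow, by downward induction along the gflow's partial order restricted to the non-outputs. Recall that a gflow $(g,\prec)$ assigns to each $i\in\Bar{O}$ a set $g(i)\subseteq\Bar{I}$ such that: (g1) $j\in g(i)$, $j\neq i$ implies $i\prec j$; (g2) $j\in\nodd{g(i)}$, $j\neq i$, $j\in\Bar{O}$ implies $i\prec j$; and the plane conditions hold, namely $i\notin g(i)$, $i\in\nodd{g(i)}$ for $XY$; $i\in g(i)$, $i\in\nodd{g(i)}$ for $XZ$; and $i\in g(i)$, $i\notin\nodd{g(i)}$ for $YZ$. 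The crucial observation is that $I=O$ gives $\Bar{I}=\Bar{O}$, so every correction set consists of non-output qubits only: $g(i)\subseteq\Bar{O}$.

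I will prove, by induction downward in $\prec$ over the finite set $\Bar{O}$, the following claim for every $i\in\Bar{O}$: $\lambda(i)=YZ$, and $i$ has no neighbor in $\Bar{O}$. The induction hypothesis for a given $i$ is that the claim holds for every non-output $j$ with $i\prec j$. The base case is an $i$ that is maximal in $\Bar{O}$; it is covered by the same argument, since the hypothesis is then vacuous.

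Fix $i$. By (g1) and $g(i)\subseteq\Bar{O}$, every $j\in g(i)\setminus\{i\}$ is a non-output with $i\prec j$. By the hypothesis, such a $j$ has no non-output neighbors.

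\emph{Step 1: $i\notin\nodd{g(i)}$.} Whether $i$ lies in $\nodd{g(i)}$ is decided by the parity of $|\neigh{i}\cap g(i)|$. The vertex $i$ itself contributes nothing, since there are no self-loops. Every other $j\in g(i)$ is a non-output without non-output neighbors, so it is not adjacent to the non-output $i$. Hence $|\neigh{i}\cap g(i)|=0$, and $i\notin\nodd{g(i)}$.

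\emph{Step 2: $\lambda(i)=YZ$.} Step 1 rules out the $XY$ and $XZ$ labels, since both require $i\in\nodd{g(i)}$. Only $YZ$ remains, so $\lambda(i)=YZ$ and in addition $i\in g(i)$.

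\emph{Step 3: $i$ has no non-output neighbor.} Suppose $k\in\Bar{O}$ is adjacent to $i$. Count the neighbors of $k$ inside $g(i)$. The vertex $i\in g(i)$ is one of them. No other $j\in g(i)$ can be, because such a $j$ has no non-output neighbors and $k$ is a non-output. So exactly one element of $g(i)$ is adjacent to $k$, giving $k\in\nodd{g(i)}$ with $k\neq i$ and $k\in\Bar{O}$. By (g2), $i\prec k$. The hypothesis then says $k$ has no non-output neighbors, which contradicts $k\sim i$. This closes the induction.

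The claim for all $i\in\Bar{O}$ says exactly that $\lambda\equiv YZ$ and that the subgraph induced by $\Bar{O}$ has no edges, i.e. the open graph is of RL form.

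The main point of care is Step 3, where the induction hypothesis must supply adjacency information about the larger elements $j\succ i$, not only their labels; this is why the claim bundles the plane label together with the absence of non-output neighbors. A secondary point is the convention for Pauli labels: a $Z$ label falls in the $YZ$ plane anyway, and if $X$ or $Y$ labels are admitted, their conditions are excluded by Step 1 in the same way.
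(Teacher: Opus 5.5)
Your proof is correct, and it takes a genuinely different route from the paper's.

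The paper passes to a maximally delayed gflow and relies on the layer machinery of Ref.~\cite{backens_there_2021}:
\begin{enumerate}[1)]
\item it uses Lemma~C.4, together with $V_0^\prec = O$, to identify $V_1^\prec$ as exactly the $YZ$-measured qubits with $\neigh{u}\subseteq O$;
\item it restricts the gflow via Lemma~C.3 and applies Lemma~C.4 again to show $V_2^\prec=\emptyset$;
\item it concludes $V = O\cup V_1^\prec$.
\end{enumerate}
You instead work directly from the definition, for an arbitrary gflow. The single observation $g(i)\subseteq\Bar{I}=\Bar{O}$, combined with a downward induction along $\prec$, does all the work. Bundling the plane label with the absence of non-output neighbors in the induction hypothesis is exactly what makes Step~3 go through. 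Steps~1--3 are all sound.

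Your route is self-contained and slightly more general. You need neither the existence of a maximally delayed gflow nor Lemmas~C.3 and C.4. You also never use the standing assumption that every connected component contains an input or output. The paper explicitly flags that assumption as necessary for its use of Lemma~C.4, since it is what yields $V_0^\prec=O$.

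The paper's route directly exhibits the layer structure: all non-outputs lie in the single layer $V_1^\prec$. The authors use this to remark that all measurements can be performed in one time step. Under your argument the same conclusion is immediate afterwards: once the RL form is established, Theorem~\ref{thm:rl_yz_impl} supplies the depth-one gflow $g(v)=\{v\}$.
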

This theorem shows that a measurement pattern that has a priori no restrictions on the measurement planes is forced by the mere requirement of the uniform determinism to have measurements only in the $YZ$ plane.

\begin{figure}
    \centering
    \includegraphics[width=0.5\linewidth]{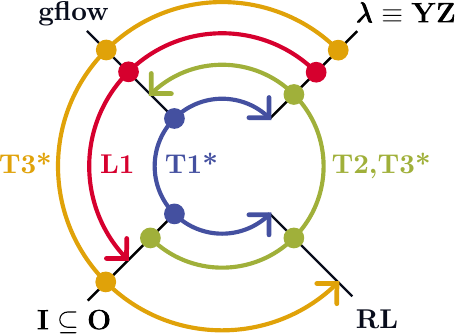}
    \caption{Diagram explaining the mutual relations between Theorems~\ref{thm:io_g_to_rl_yz}, \ref{thm:rl_yz_impl}, and \ref{thm:mod_isaac_thm_1}, denoted in the figure by T1, T2, and T3, respectively, and Lemma~\ref{thm:inputs_gflow} denoted by L1. The asterisk marks that the theorem is valid only for $I = O$. Note that T3 contains two implications, either of which corresponds to one arrow.}
    \label{fig:thm_diag}
\end{figure}

One can further study whether any triple of the features implies the fourth one. It turns out that all four such implications hold true, except that some of them hold only when one restricts condition $I \subseteq O$ to the equality $I = O$. The cases are covered by Lemma~\ref{thm:inputs_gflow}, Theorem~\ref{thm:io_g_to_rl_yz}, and the theorem below:
\begin{thm}
    \label{thm:rl_yz_impl}
    Let $\Gamma = (G, I, O, \lambda)$ be a labeled open graph with $I \subseteq O$ and $\lambda \equiv YZ$, where $(G, I, O)$ is of the register-logic form. Then $\Gamma$ has gflow.
\end{thm}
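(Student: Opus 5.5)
The proof is a direct construction of a gflow $(g, \prec)$. Recall the definition from Ref.~\cite{browne_generalized_2007}. For every non-output $u \in \Bar{O}$ we need a correction set $g(u) \subseteq \Bar{I}$ and a strict partial order $\prec$ on the vertices such that:
\begin{enumerate}[(i)]
\item $v \in g(u)$ with $v \neq u$ implies $u \prec v$;
\item $v \in \nodd{g(u)}$ with $v \neq u$ implies $u \prec v$;
\item since $\lambda(u) = YZ$, we have $u \in g(u)$ and $u \in \nodd{g(u)}$.
\end{enumerate}
Here $\nodd{A}$ denotes the set of vertices with an odd number of neighbours in $A$.

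\textbf{Construction.} The natural choice is $g(u) = \{u\}$ for every $u \in \Bar{O}$. This set avoids the inputs because $u \notin O \supseteq I$, so $u \notin I$.

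\textbf{Checking the conditions.}
\begin{enumerate}[(i)]
\item Since $g(u) = \{u\}$, there are no elements $v \neq u$ in $g(u)$, so this condition holds vacuously.
\item $\nodd{\{u\}}$ is exactly the neighbourhood $\neigh{u}$. Graph states have no self-loops, so $u \notin \neigh{u}$.
\item From (ii), $u \in g(u)$ and $u \notin \nodd{g(u)}$, which is precisely the $YZ$-plane requirement.
\end{enumerate}

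\textbf{The order.} It remains to satisfy condition (ii): every neighbour $v$ of $u$ must satisfy $u \prec v$. By the register-logic form, $\Bar{O}$ induces no edges, so every such $v$ is an output. We therefore define $u \prec v$ if and only if $u \in \Bar{O}$ and $v \in O$. This relation is a strict partial order:
\begin{itemize}
\item it is irreflexive because $\Bar{O} \cap O = \emptyset$;
\item it is transitive vacuously, since no element is simultaneously above something (an output) and below something (a non-output).
\end{itemize}
Condition (ii) then holds, and outputs need no correction sets. This completes the gflow.

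\textbf{Main obstacle.} The argument is routine; the only care required is to match the exact convention for gflow. The condition $g(u) \subseteq \Bar{I}$ is where $I \subseteq O$ enters. The absence of edges inside $\Bar{O}$ is essential: if two non-outputs $u, w$ were adjacent, we would need both $u \prec w$ and $w \prec u$ with singleton correction sets. Some conventions also impose conditions on elements of $g(u)$ itself; with singleton sets all such conditions are trivially met.
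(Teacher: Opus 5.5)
Your proof is correct and is the paper's proof: singleton correction sets $g(u)=\{u\}$, with $u\prec v$ exactly when $u\in\Bar{O}$ and $v\in O$, where the RL form puts $\nodd{\{u\}}=\neigh{u}\subseteq O$. The one slip is in your restatement of the $YZ$ condition, which should read $u\notin\nodd{g(u)}$ rather than $u\in\nodd{g(u)}$; you verify the correct version later, and restricting the order to $u\in\Bar{O}$ makes irreflexivity explicit where the paper's wording leaves it implicit.
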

The proofs of the theorems together with the full discussion are presented in Sec.~\ref{sec:cond_discussion}. For the reader's convenience, in Fig.~\ref{fig:thm_diag} the mutual relations between these claims are exposed. One can compare our findings with those of Ref.~\cite{smith_parity_2024}, where the first systematic discussion of $YZ$-only measurement patterns is made. One of the main results of that work is Theorem~2 that is restated in its supplementary as Theorem~1 and proven there. In Sec.~\ref{sec:cond_discussion}, we present two counterexamples to this theorem and propose the following modification: 
\begin{thm}[Modification of Theorem~2 in \cite{smith_parity_2024}]
    \label{thm:mod_isaac_thm_1}
    Let $\Gamma = (G, I, O, \lambda)$ be a labeled open graph with $I = O$ and $\lambda \equiv YZ$. Then $\Gamma$ has gflow if and only if $(G, I, O)$ is of the register-logic form.
\end{thm}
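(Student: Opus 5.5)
Theorem~\ref{thm:mod_isaac_thm_1} combines the two results stated just before it, so the plan is to derive each direction from one of them.

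For the direction ``gflow $\Rightarrow$ RL'', take $\Gamma$ with $I=O$ and gflow. Theorem~\ref{thm:io_g_to_rl_yz} then gives directly that $(G,I,O)$ is of the register-logic form. The additional hypothesis $\lambda\equiv YZ$ is consistent with, and in fact forced by, that theorem. For the direction ``RL $\Rightarrow$ gflow'', take $\Gamma$ with $I=O$, $\lambda\equiv YZ$ and $(G,I,O)$ of RL form. Since $I=O$ implies $I\subseteq O$, Theorem~\ref{thm:rl_yz_impl} applies and yields gflow.

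If Theorems~\ref{thm:io_g_to_rl_yz} and \ref{thm:rl_yz_impl} are not yet proven at this point, I would prove their content directly, using the standard gflow conditions: a function $g$ and a strict partial order $\prec$, where for a $YZ$-measured vertex $v$ we need $v\in g(v)$, $v\notin \nodd{g(v)}$, and every $w\in g(v)\cup\nodd{g(v)}$ with $w\neq v$ satisfies $v\prec w$, with $g(v)\subseteq \bar I$.

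\emph{Sufficiency.} Set $g(v)=\{v\}$ for each non-output $v$. Since $I\subseteq O$, we have $v\notin I$, so $g(v)\subseteq\bar I$ is legitimate. Because of the RL form, $\nodd{\{v\}}=\neigh{v}\subseteq O$. Also $v\notin\nodd{\{v\}}$, since there are no self-loops. It then suffices to declare every non-output to precede every output.

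\emph{Necessity.} Suppose there is an edge $uv$ between two non-outputs. Consider a $\prec$-maximal non-output $v$ among the measured vertices. With $I=O$, $g(v)$ may contain only non-outputs, and every element of $g(v)\setminus\{v\}$ would have to be a later non-output, which is impossible by maximality. So $g(v)\subseteq\{v\}$, and hence $g(v)=\{v\}$ (if $v\in g(v)$) or $g(v)=\emptyset$. The empty set is excluded: every plane condition requires $v\in g(v)\cup\nodd{g(v)}$ in some form. Thus $\nodd{g(v)}=\neigh{v}$ must consist only of vertices succeeding $v$, i.e.\ of outputs. The same conditions also force $v\in g(v)$ and $v\notin\nodd{g(v)}$, which is exactly the $YZ$ label.

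For general $v$ I would then proceed by downward induction along $\prec$, peeling off maximal elements. If all non-outputs above $v$ have the form $g=\{\cdot\}$ with neighbourhood in $O$, then any non-output $w\in g(v)$ other than $v$ has all its neighbours among the outputs. Membership of $v$ in $\nodd{g(v)}$ is therefore determined only by whether $v\in g(v)$ and by the edges between $v$ and elements of $g(v)$. I expect this inductive step to be the main obstacle, namely showing that $v$ has no non-output neighbours. The key observation I would use is that a non-output neighbour $w$ of $v$ that is later than $v$ already has $\neigh{w}\subseteq O$ by the induction hypothesis, which contradicts $v\in\neigh{w}$. A non-output neighbour $w$ with $w\prec v$ or incomparable to $v$ lies in $\nodd{g(v)}$ unless it is cancelled by the odd-neighbourhood parity. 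Handling this cancellation is where care is needed, and I would argue via $w$'s own correction set.
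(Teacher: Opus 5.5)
Your reduction is exactly the paper's argument: the paper derives Theorem~\ref{thm:mod_isaac_thm_1} directly from Theorem~\ref{thm:io_g_to_rl_yz} for the forward direction and from Theorem~\ref{thm:rl_yz_impl} (with $I=O$ giving $I\subseteq O$) for the converse, so the proposal is correct as it stands. The fallback is not needed, but the inductive step you flag closes at once if the induction hypothesis is only $\neigh{w}\subseteq O$ for later non-outputs $w$ (drop the claim $g(w)=\{w\}$, which need not hold). Then $\nodd{g(v)\setminus\{v\}}\subseteq O$, so $v\in g(v)$ is forced, no non-output neighbour of $v$ can cancel out of $\nodd{g(v)}$, and any such neighbour would have to succeed $v$, contradicting the hypothesis; unlike the paper's route through maximally delayed gflow and Lemma~\ref{thm:lemma_c_4}, this does not need every connected component to contain an input or output.
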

This theorem is a direct consequence of Theorems~\ref{thm:io_g_to_rl_yz} and \ref{thm:rl_yz_impl}. Compared to the original, two changes are made. First, instead of only requiring the equality of the number of inputs and outputs, here we require the equality of the sets themselves. Second, we replace the condition on $G$ being bipartite with respect to $I$ with being of the register-logic form. The detailed discussion can be found in Sec.~\ref{sec:cond_discussion}.

Note further that as far as the correction strategy due to gflow is concerned, a labeled open graph of the RL form with $I = O$ has the same gflow as the corresponding bRL version, where all edges between nodes in $O$ are removed. The underlying graph $G$ of the original RL graph can be understood as the union of $G_1 = (V, E_1)$ and $G_2 = (O, E_2)$ with $E_1 \cap E_2 = \emptyset$, where $V$ is the set of nodes of $G$, $E_1$ are the edges between $O$ and $\Bar{O}$, and $G_2$ is the subgraph of $G$ induced by $O$. Instead of the graph $G$ with the initial state $\ket{\psi}$ of the input qubits one can consider graph $G_1$ with $\ket{\psi'} \coloneqq \prod_{e \in E_2} \cz_e \ket{\psi}$ as the initial state. This bRL version of the original RL graph is bipartite and it is this restricted class of labeled open graphs that is considered in Ref.~\cite{smith_parity_2024}.

A consequence of the proof of Theorem~\ref{thm:io_g_to_rl_yz} is that all non-output qubits of any labeled open graph with gflow and $I = O$ can be measured simultaneously in a single time step. One can come to the same conclusion by noting that all the gates implemented by measuring non-output qubits in the $YZ$-plane are the controlled phase gates $\exp(i \alpha Z \otimes \ldots \otimes Z)$ for some real $\alpha$, see Sec.~\ref{sec:parity_architecture}, which are all diagonal in the $Z$ basis. All such gates commute and so can be performed simultaneously. As a side note, single-qubit $R_Z$ rotations are implemented by the measurement of non-output qubits that have exactly one neighbor. The RL graph with only $YZ$-plane measurements can therefore implement only diagonal unitaries and thus does not allow for universal computation. Compare this with the findings of Ref.~\cite{perdrix_determinism_2017}, where it is proven that MBQC with measurements restricted to the $XZ$-plane (so-called \emph{real MBQC}) that is further performed on a bipartite graph is not universal, because all measurements can be carried out simultaneously. 

\section{Parity Architecture in MBQC}
\label{sec:parity_architecture}

The graph state suitable for uniformly deterministic measurement-based computation, where the outputs coincide with the inputs, is due to the claims proven above necessarily of the register-logic form. This restricted form lends itself to an intuitive interpretation from the computation point of view. Each non-output qubit $u$ in the state is measured in the $YZ$-plane and is attached only to a subset of output qubits $v_1, \ldots, v_n$. The measurement can be decomposed as $M_{YZ}(\theta) = M_Z R_X(-\theta)$, where we use the sign convention of Ref.~\cite{simmons_relating_2021}. The graph-state stabilizer $X_u \otimes Z_{v_1} \otimes \cdots \otimes Z_{v_n}$ of qubit $u$ allows to express the single-qubit rotation $R_X(-\theta)$ as $R_X(-\theta) \coloneqq \exp(-i \theta/2 \, X_u) = \exp(-i \theta/2 \, Z_{v_1} \otimes \cdots \otimes Z_{v_n})$, where the last expression is in general a multiqubit controlled phase gate that acts on the output qubits. To each non-output $u$ one can thus naturally assign a \emph{parity label} \cite{klaver_parity_2026} of the form $v_1 \ldots v_n$ that conveys what outputs are affected by the measurement of $u$. In Fig.~\ref{fig:rl_state}(b), an RL open graph is shown with the parity label of each qubit explicitly depicted. To emphasize the role of each qubit in the state, from now on we refer to an output qubit as a \emph{base qubit} and to a non-output qubit as a \emph{parity qubit}.

Any unitary matrix diagonal in the $Z$ basis can be decomposed into a product of multiqubit controlled phase gates and single-qubit $R_Z$ rotations. Specifically, let $U = \mathrm{diag}(\exp(i \alpha_1), \ldots, \exp(i \alpha_{2^n}))$ be a diagonal matrix for $n$ base qubits. One can understand the $2^n$-tuple $(\alpha_1, \ldots, \alpha_{2^n})$ as a vector in $\mathbb{R}^{2^n}$ and consider the basis of this vector space formed by the rows of the $2^n$-dimensional Hadamard matrix that is constructed using the recursive Sylvester's construction $H_{2^n} \coloneqq H_2 \otimes H_{2^{n-1}}$ \cite{Sylvester1867LXTO}. Each row of $H_{2^n}$ can be seen as the diagonal of a certain tensor product of identity and Pauli $Z$ matrices $Z^{s_1} \otimes \ldots \otimes Z^{s_n}$ for some $s_j \in \{ 0, 1 \}$. The unitary can therefore be decomposed as 
\begin{equation}
    U = \exp \left( i \sum_{s_j} \beta_{s_1, \ldots, s_n} Z^{s_1} \otimes \ldots \otimes Z^{s_n} \right)
\end{equation}
for some real coefficients $\beta_{s_1, \ldots, s_n}$. As follows from above, the unitary can be implemented by a bRL open graph whose form reflects the decomposition just made. The subset of non-zero exponents $s_j$ for a given coefficient $\beta_{s_1, \ldots, s_n}$ identifies the subset of base qubits whose parity the coefficient represents. This in turn corresponds to one parity qubit in a bRL graph. In Fig.~\ref{fig:rl_state}(b), the resulting labeled open graph is shown for a general diagonal unitary acting on four qubits. Note that for the appropriate choice of coefficients, the bRL graph represents the unitary exactly.

\begin{figure*}
    \centering
    \includegraphics[width=.9\linewidth]{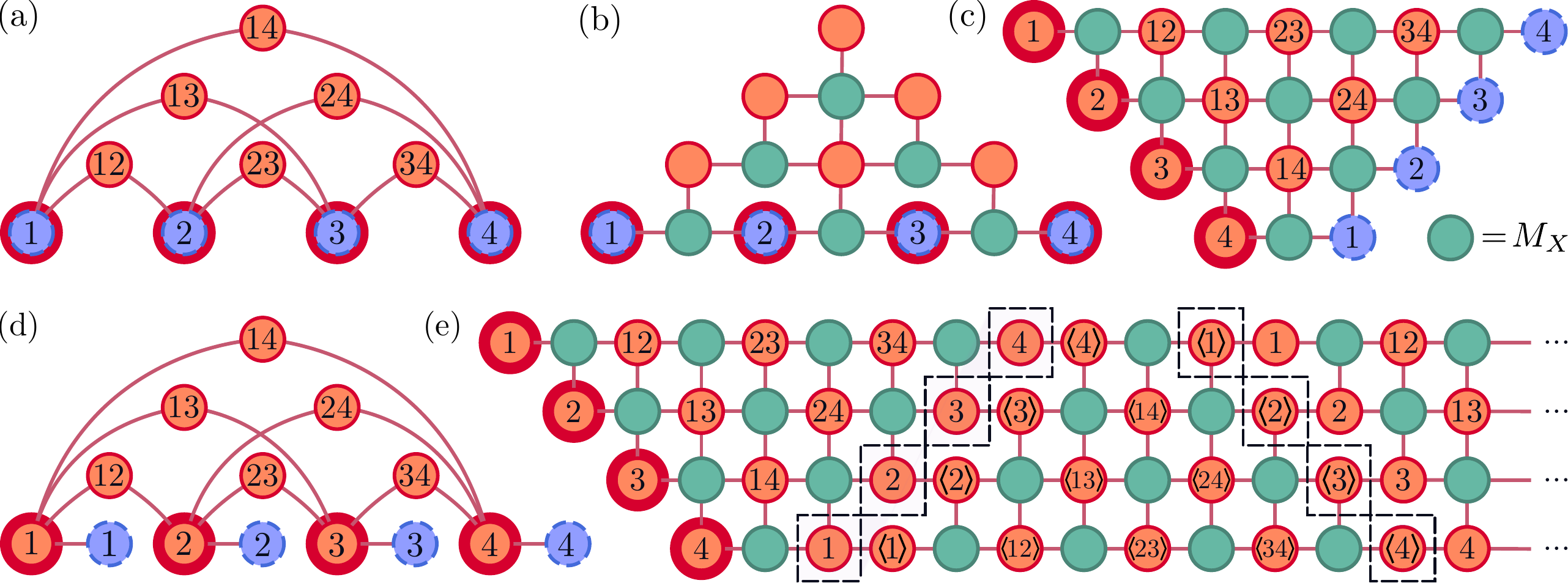}
    \caption{Parity Architecture in MBQC picture. %
    (a) MBQC formulation of the LHZ triangle, where the base qubits are indexed by single-digit numbers. The two-digit numbers indexing each parity qubit represent the two base qubits, whose parity is affected by the local action on that parity qubit. The base qubits are simultaneously both inputs and outputs. Adapted from \cite{smith_parity_2024}. %
    (b) Triangular segment of a cluster state, where every other vertical edge in the last row is intentionally missing. The qubits in the $X$ sublattice, rendered in green (light gray), are measured in Pauli $X$ basis to arrive at the graph in (a). %
    (c) Beveled cluster with four inputs. %
    (d) The post-measurement state for (c) after the Pauli $X$ measurements are made. %
    (e) Series of alternating beveled clusters that supports universal quantum computing. Note the missing vertical edges at the interface of two beveled clusters. Every other beveled cluster implements the $\exp(i \alpha X_i \otimes X_j)$ operations as marked by the parity labels of the form $\avg{ij}$, together with the $R_X$ rotations marked by $\avg{i}$. The graphical convention used throughout is as follows: the input qubits have an extra thick red (dark gray) frame, the outputs are blue with a dashed outline. The qubits that are to be measured in the course of computation are orange (with thin dark gray outline), while those measured in a Pauli basis prior to the computation are green (with thin light gray outline). %
    }
    \label{fig:parity_arch}
\end{figure*}

The identification made above of parity and base qubits fits precisely into the framework of \emph{Parity Architecture}, where parity qubits are manipulated, not necessarily measured, to apply gates to base qubits. This is done by first encoding the state of base qubits into a larger stabilizer quantum state, whose stabilizers are referred to as \emph{plaquettes}. In this encoding, the \emph{Lechner-Hauke-Zoller (LHZ) encoding} \cite{lechner_quantum_2015}, the long-range interactions between the base qubits are mediated by local operations on the parity qubits that are linked by nearest-neighbor interactions only. Originally designed for the annealing platform \cite{lechner_quantum_2015}, the LHZ encoding was later expressed in the language of stabilizers \cite{rocchetto_stabilizers_2016} and also extended to universal digital quantum computing \cite{fellner_universal_2022,fellner_applications_2022}. It was further shown that the use of measurements in this scheme can reduce the encoding and decoding depths \cite{messinger_constant_2023}. This measurement-enhanced modification was then found to be equivalent to a hybrid scheme with an alternating series of gate-based applications of single-qubit operators and measurement-based stages, where each stage is a special case of the scheme in Fig.~\ref{fig:rl_state}(b) \cite{smith_parity_2024}.

The universal hybrid scheme can be fully reformulated in the MBQC picture where multiple stages from Fig.~\ref{fig:rl_state}(b) are ``glued'' onto a series of linear cluster states and where some qubits are measured in the $YZ$-plane and some in the $XY$-plane \cite{smith_minimally_2025}. In general, the resulting graph contains long-range edges and the essential feature of the Parity Architecture is lost. In what follows, a modification of the scheme is presented that supports universal computation and that can be embedded into a graph state with only short-range edges. We focus on the restricted version of the scheme in Fig.~\ref{fig:rl_state}(b), henceforth referred to as the \emph{LHZ triangle}, where only the parity qubits that represent pairwise parities of the base qubits are kept, see Fig.~\ref{fig:parity_arch}(a). A single LHZ triangle can be obtained by starting with a triangular segment of a cluster state, which is bipartite, and measuring one part in the $X$ basis, as shown in Fig.~\ref{fig:parity_arch}(b) \cite{messinger_constant_2023,smith_parity_2024}. The qubits in the other part are measured in the $XY$-plane, but the $X$ measurements incur extra Hadamard gates that effectively turn these into $YZ$-plane measurements. For completeness, in Sec.~\ref{sec:cluster_to_lhz} the explicit series of $X$ basis measurements is demonstrated that transforms the local-interaction cluster into the LHZ triangle together with the measurement plane change. The resulting LHZ triangle is such that its outputs coincide with the inputs, which might not be desirable e.g. when the coherence times of qubits are shorter than the total computational time. A more convenient alternative is the state shown in Fig.~\ref{fig:parity_arch}(c), henceforth referred to as a \emph{beveled cluster}, which features the same pairwise parities as the LHZ triangle. When one sublattice of the beveled cluster is measured in the $X$ basis, the resulting state is the LHZ triangle where each input has an extra neighbor, the corresponding output qubit, cf.~Fig.~\ref{fig:parity_arch}(d). The beveled cluster is the MBQC counterpart of the Parity Twine network \cite{klaver_swap-less_2026,dreier_connectivity-aware_2025}. Note that the triangular cluster in Fig.~\ref{fig:parity_arch}(b) contains $n^2$ physical qubits where $n$ is the number of input (base) qubits, while the beveled cluster for the same number of inputs contains $n^2 + 2 n$ qubits. For the graph state with all pairwise parity qubits and distinct inputs and outputs, this is the minimum number of physical qubits. In order for the addition of a base or parity qubit not to introduce a new degree of freedom, each of them must be accompanied by another qubit as a constraint, and so for $n$ base qubits with $(n^2 - n)/2$ parity qubits and $n$ extra outputs, one needs $2(n + (n^2 - n)/2) + n = n^2 + 2n$ physical qubits \cite{messinger_constant_2023}.

A single beveled cluster can implement any single- and two-qubit unitary diagonal in the $Z$ basis. To reach full (approximate) universality, it suffices to be able to implement any single-qubit $R_X$ rotation. This can be effected by a three-qubit linear cluster attached to each of the outputs of the beveled cluster, where the resulting block of beveled and linear clusters can be tiled one after another to create a sheet of a graph state that supports universal quantum computing. This approach to universality closely follows Ref.~\cite{smith_parity_2024}. Another possibility is to attach multiple beveled clusters directly, where the outputs of the first beveled cluster do \emph{not} coincide with the inputs of the second one. Instead, they are connected by extra edges so that the bulk of the resulting state resembles a standard cluster state with some vertical edges missing. The extra edges effectively turn the pairwise parity labels $ij$ that represent operator $\exp(i \alpha \, Z_i \otimes Z_j)$ into parity labels of the form $\avg{ij}$ that represent $\exp(i \alpha \, X_i \otimes X_j)$ \cite{klaver_parity_2026}. Similarly, the single-qubit parity $\avg{i}$ represents the $R_X$ rotation on qubit $i$. The entire scheme comprises the alternating series of beveled clusters, as shown in Fig.~\ref{fig:parity_arch}(e), where the first beveled cluster implements a unitary $U_Z$ diagonal in the $Z$ basis, the second cluster implements a unitary $U_X$ diagonal in the $X$ basis, the third one implements another $U_Z$ and so on. Obviously, any single-qubit $R_X$ rotation is a special case of $U_X$. 

Each qubit in the series of beveled clusters is measured in the $XY$-plane. There are many that are measured specifically in the $X$ basis and these can be subject to measurement all at once as the preliminary step of the computation. As noted above, the $X$ measurement results in a post-measurement stabilizer state that differs from a graph state by extra Hadamard gates applied to some qubits. These gates can be merged with the $XY$-plane measurements to turn them into $YZ$-plane measurements that act on an actual graph state. In Fig.~\ref{fig:parity_arch}(e), the qubits enclosed in dashed boxes are the only ones that are not affected by this change of the measurement plane. This finishes the discussion on the universal measurement pattern inspired by the Parity Architecture, where after the preliminary step most of the qubits are effectively measured in the $YZ$-plane. In the next section, we demonstrate that $YZ$-plane measurements alone suffice for universality. 

\section{Universal YZ-plane MBQC}
\label{sec:univ_yz}

One of the conclusions drawn in Sec.~\ref{sec:yz_mbqc_with_gflow} is that uniformly deterministic measurement patterns where all non-outputs are measured in the $YZ$-plane do not allow for universal quantum computing. In this section, we depart from the constraints of uniform determinism and study a more general class of deterministic patterns determined by the existence of \emph{Pauli flow} \cite{browne_generalized_2007}. Any measurement in the $YZ$ plane is represented by a measurement basis of the form $M_Z R_X(-\theta)$ for some real $\theta$. This includes not only rotated bases for some non-trivial $\theta$, but also Pauli bases $Y$ and $Z$ that correspond to $\theta = \pi/2$ and $\theta = 0$, respectively. From the point of view of gflow, these two extreme cases play no essential role as the correction strategy has to work for any $\theta$. However, for Pauli flow these cases allow for more general correction strategies. The special standing of the two Pauli bases necessitates the introduction of the following symbolic notation: $\lambda \subseteq YZ$ stands for the case where all non-outputs are measured either in the rotated $YZ$-plane basis, $Y$ basis, or $Z$ basis. It turns out, see Sec.~\ref{sec:pauli_flow_pushed}, that Theorems~\ref{thm:io_g_to_rl_yz} and \ref{thm:mod_isaac_thm_1} are no longer valid when the existence of gflow is replaced by the existence of Pauli flow. Theorem~\ref{thm:rl_yz_impl} does remain valid as any gflow is a special case of Pauli flow, and Lemma~\ref{thm:inputs_gflow} is replaced by:
\begin{lem}[Generalization of Lemma~\ref{thm:inputs_gflow}]
    \label{thm:inputs_pauli_flow}
    Any labeled open graph $(G, I, O, \lambda)$ with Pauli flow and $\lambda \subseteq YZ$ must be such that each input is either also an output or is measured in the $Y$ basis.
\end{lem}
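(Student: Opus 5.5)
We argue directly from the definition of Pauli flow of Browne et al., in the same way that Lemma~\ref{thm:inputs_gflow} follows from the definition of gflow. Recall that a Pauli flow on $(G,I,O,\lambda)$ consists of a correction function $p:\Bar{O}\to 2^{\Bar{I}}$ and a strict partial order $\prec$ on the vertices. These must satisfy a list of conditions. The ones we need are:
\begin{enumerate}[(i)]
\item $p(u)\subseteq \Bar{I}$, i.e.\ correction sets never contain inputs;
\item if $\lambda(u)=YZ$ then $u\in p(u)$ and $u\notin\nodd{p(u)}$;
\item if $\lambda(u)=Z$ then $u\in p(u)$;
\item if $\lambda(u)=Y$ then $u\in p(u)\,\triangle\,\nodd{p(u)}$.
\end{enumerate}
Here $\nodd{A}$ denotes the set of vertices having an odd number of neighbours in $A$. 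Conditions (ii)--(iv) are the measurement-plane constraints for planes and Pauli axes lying in the $YZ$-plane. Condition (i) is built into the codomain of $p$.

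The proof is then a short case analysis on an input $u\in I$ that is not an output. Since $u\in\Bar{O}$, it is measured, and the hypothesis $\lambda\subseteq YZ$ leaves three possibilities: $\lambda(u)\in\{YZ, Z, Y\}$.

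If $\lambda(u)=YZ$ or $\lambda(u)=Z$, then (ii) or (iii) requires $u\in p(u)$. But (i) gives $p(u)\subseteq\Bar{I}$ and $u\in I$, a contradiction. Hence only $\lambda(u)=Y$ remains, which is exactly the claim.

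It is worth checking that the $Y$ case is genuinely consistent, so that the statement is not vacuous. Condition (iv) can be met with $u\notin p(u)$ provided $u\in\nodd{p(u)}$. This only requires $p(u)$ to contain an odd number of neighbours of $u$, which is compatible with $p(u)\subseteq\Bar{I}$. A one-line example, such as a single input $Y$-measured next to an output, would illustrate this but is not needed for the lemma.

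The only real obstacle is fixing the precise convention of the Pauli flow definition. In particular, one must confirm that the $Z$ and $YZ$ conditions indeed force $u\in p(u)$, and that the requirement $p(u)\subseteq\Bar{I}$ is part of the definition rather than a derived property. The paper defers these definitions to Sec.~\ref{sec:prelims}, so I would cite the formulation there (following Browne et al.\ and Ref.~\cite{backens_there_2021}) and verify each of the three cases against it. In formulations where the $X$ and $Y$ conditions are stated with extra order constraints, those constraints play no role here. Once the definition is fixed, no further difficulty is expected.
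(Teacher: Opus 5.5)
Your proposal is correct and follows the paper's argument. Because $p$ maps into $\mathcal{P}(\Bar{I})$, any input $u\in I\setminus O$ satisfies $u\notin p(u)$, which the $YZ$ and $Z$ conditions of Def.~\ref{def:pauli_flow} forbid, leaving only $\lambda(u)=Y$; the paper phrases this slightly more generally (also excluding the $XZ$ plane, so that inputs must lie in the $XY$-plane) and then specializes, but the reasoning is identical.
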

This lemma is proven in Sec.~\ref{sec:pauli_flow_pushed} and represents in the current context the only real restriction on the form of the labeled open graph with Pauli flow. In this section, we demonstrate that one can run universal quantum computation that is driven solely by measurements in the $YZ$ plane of the Bloch sphere. We present two approaches. The first one is the direct demonstration with the explicit construction of Pauli flow for a universal gate set, that can efficiently approximate any unitary, corresponding to \emph{strict universality}. The second approach reduces the problem to the already known result on the \emph{computational universality} of measurements restricted to the $XZ$-plane of the Bloch sphere, where a computationally universal gate set efficiently simulates any unitary from a strictly universal gate set with polylogarithmic overhead in qubits and gate count \cite{aharonov_simple_2003}.

\subsubsection{Strict universality}
\label{sec:univ_yz_pattern}

\begin{figure}
    \centering
    \includegraphics[width=.9\linewidth]{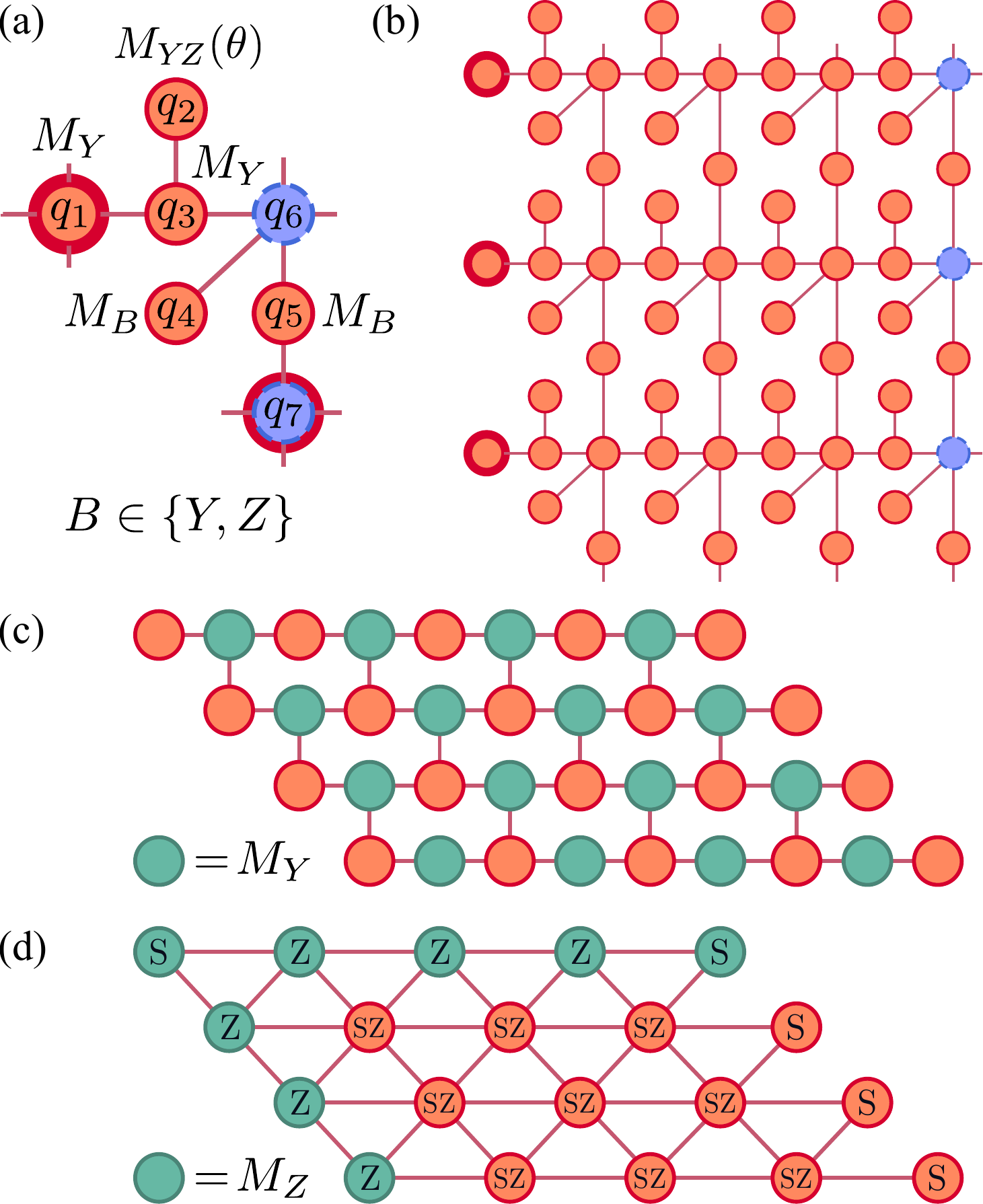}
    \caption{Universal $YZ$-only measurement patterns. %
    (a) The unit cell for complex MBQC with measurement bases and indices depicted. %
    (b) Sheet of unit cells for complex MBQC. %
    (c) Hexagonal grid. One of its two sublattices is subjected to the $Y$ basis measurements. %
    (d) The post-measurement state when all such measurements reduce to $\ket{+i}$ projections. In green (light gray) are marked those qubits that are measured in the $Z$ basis in the second step of the construction of the real-valued universal $YZ$-only pattern. The letters denote the extra gates applied on top of the graph state. %
    }
    \label{fig:univ_yz}
\end{figure}

Here, we present an MBQC scheme that can efficiently approximate any unitary and thus allows for universal quantum computation. Consider a labeled open graph that is constructed by tessellating the unit cell depicted in Fig.~\ref{fig:univ_yz}(a), where the graphical representation follows the notation of Fig.~\ref{fig:parity_arch} except that individual qubits are not marked by their parity label. Instead, they are assigned an index of the form $q_j$ for convenience. Qubit $q_1$ is the first input, qubit $q_6$ is the first output and qubit $q_7$ is both the second input and second output of the unit cell. The center qubit with index $q_3$ is always measured in the $Y$ basis, irrespective of the gate or algorithm implemented and thus can be measured prior to the computation. It is kept there so that each qubit is explicitly measured in the $YZ$-plane. The entire pattern composed of the unit cells is shown in Fig.~\ref{fig:univ_yz}(b).

To demonstrate universality, we show how the unit cell allows for the implementation of the universal set of gates $\{ H, S, \cz, R_Y(\theta) \}$ composed of the generators of the Clifford group together with the class of single-qubit rotations $R_Y$. Note that since the structure of the pattern is regular and fixed, we also have to demonstrate that the unit cell implements the identity. These gates are executed up to Pauli corrections and we discuss each of them in turn in the following. The Pauli corrections are powers of Pauli operators, where the exponents are sums of binary outcomes for the measurement of each qubit. The binary outcome for qubit $q_j$ is denoted by $b_j$ and so $b_j\in\{0,1\}$.

The single-qubit gates require that the bond that connects the two outputs of the unit cell is broken by measuring qubit $q_5$ in the $Z$ basis. Qubits $q_6$ and $q_7$ acquire an extra $Z$ correction $Z^{b_5}$. For the case of a single-qubit gate $G$ acting on the input we thus focus only on the upper part of the cell composed of qubits $q_1$, $q_2$, $q_3$, $q_4$, and $q_6$. 
When the measurement basis of qubit $q_4$ is set to $Y$, the pattern implements the gate $G_Y(\theta) = H R_Y((-1)^{1 + b_1} \theta)$ with Pauli correction on qubit $q_6$ of the form $X^{b_1+b_2+b_3} Z^{1+b_2+b_3+b_4+b_5}$. When instead the measurement basis for qubit $q_4$ is set to $Z$, one gets $G_Z(\theta) = S G_Y(\theta)$ with Pauli correction on qubit $q_6$ of the form $X^{b_1+b_2+b_3} Z^{1 + b_1+b_4+b_5}$. It is easy to verify that $R_Y((-1)^c \pi/2) = Z^{1+c} X^c H$ and so we can compile the following table, where for a given gate $G$ the measurement basis $M^{(4)}$ of qubit $q_4$, the measurement angle $\theta^{(2)}$ of qubit $q_2$, as well as the corresponding Pauli correction is listed: 
\begin{center}
    \begin{tabular}{c|c|c|l}
    $G$ & $M^{(4)}$ & $\theta^{(2)}$ & Correction \\ \hline
    $\ident$ & $M_Y$ & $\pi/2$ & $X^{b_2+b_3} Z^{b_1+b_2+b_3+b_4+b_5}$ \\
    $S$ & $M_Z$ & $\pi/2$ & $X^{b_2+b_3} Z^{b_1+b_4+b_5}$ \\
    $H$ & $M_Y$ & $0$ & $X^{b_1+b_2+b_3} Z^{1+b_2+b_3+b_4+b_5}$ \\
    $H R_Y(\theta)$ & $M_Y$ & $(-1)^{1 + b_1} \theta$ & $X^{b_1+b_2+b_3} Z^{1+b_2+b_3+b_4+b_5}$
\end{tabular}
\end{center}
The Pauli corrections entering the gates are not considered above but it is easy to adjust the formulas accordingly.

For the two-qubit gate, we can choose the pattern for the identity on the first logical qubit discussed above and measure qubit $q_5$ in the $Y$ basis. It turns out that such a pattern leads to a controlled phase gate $\cz_{12} (S_1 \otimes S_2)$ with Pauli correction $X_1^{b_2 + b_3} Z_1^{1 + b_1 + b_4 + b_5} \otimes Z_2^{1 + b_2 + b_3 + b_5}$. This finishes the proof that the measurement pattern that is formed by the tessellation of the unit cell in Fig.~\ref{fig:univ_yz}(a) supports universal quantum computation, where all qubits are measured in the $YZ$-plane only. As follows from the results of Ref.~\cite{smith_parity_2024} the pattern just presented does not have any gflow. However, it has a Pauli flow. To see that, it is enough to consider only a single column of the pattern, formed by stitching multiple cells vertically on top of each other, where the cell at the bottom lacks qubits $q_5$ and $q_7$, cf.~Fig.~\ref{fig:univ_yz}(b). This is allowed as one can think of the entire computation as being performed in separate steps (corresponding to separate time steps in the circuit picture), where each step is implemented by a single column of the pattern. Even more so, each unit in each column can be performed in parallel and so it suffices to consider the Pauli flow of a single cell. One can check that the following choice of correction function $p$ and strict partial order $\prec$ satisfies the definition of Pauli flow in Def.~\ref{def:pauli_flow}: $p(q_1) = \{ q_3 \}$, $p(q_2) = \{ q_2 \}$, $p(q_3) = \{ q_6 \}$, $p(q_4) = \{ q_4 \}$, and $p(q_5) = \{ q_5 \}$ with the strict partial order given by $q_1 \prec q_2 \prec q_3 \prec q_4, q_5, q'_5 \prec q_6, q_7$. 
Qubits $q_6$ and $q_7$ are unmeasured outputs and $q'_5$ denotes the qubit $q_5$ in the unit cell lying above the one considered. It is not hard to see that including this qubit from another cell does not affect the Pauli flow for the unit cell in question.

The standard literature concludes that all Pauli measurements can be performed all at once as the very first step of the computation \cite{raussendorf_computational_2002}. The corresponding correction strategy is presented in Sec.~\ref{sec:pauli_flow_pushed}.

\subsubsection{Computational universality}
\label{sec:comp_univ}

Measurements restricted to the $YZ$-plane are universal not only in the strict sense discussed in the previous section, but also allow for a weaker notion of universality, computational universality, where real-valued gates implemented by the measurements are in turn used to simulate a universal set of complex-valued gates. Even though the measurements in the $YZ$ plane do include the imaginary unit and the post-measurement state can be complex, there exists a class of patterns with $YZ$-plane measurements that can be mapped to a class of patterns that are known to be universal for real-valued computation and that employ only $XZ$-plane measurements.

Specifically, consider a hexagonal grid depicted in Fig.~\ref{fig:univ_yz}(c) that is bipartite and measure all qubits in one of the sublattices in the $Y$ basis. Following the rules for the $Y$ basis measurement \cite{hein_entanglement_2006,hein_multi-party_2004}, the post-measurement state can be represented by a graph state with the graph being a triangular grid, while individual surviving qubits get an additional local gate $Z$, $S$, or $SZ$ that acts on them. In the second step, shown in Fig.~\ref{fig:univ_yz}(d), one measures all qubits in the first column and first row of the state in the $Z$ basis, which merely removes them without affecting the graph topology. The remaining qubits are acted on either by $S Z$ or by $S$, depending on the measurement outcome and whether the qubit is an output or not. These extra gates can be merged with the subsequent measurement as $M_{YZ}(\theta) S = M_{Z} S S^\dagger R_X(-\theta) S = M_{Z} S R_Y(\theta) = M_{Z} R_Y(\theta) = M_{XZ}(\theta)$. Similarly, $M_{YZ}(\theta) S Z = M_{XZ}(-\theta)$ and so all $YZ$-plane measurements turn into $XZ$-plane measurements. For the latter it was shown that the triangular grid allows for the universal computation, where real unitaries thus implemented simulate a universal set of gates \cite{mhalla_graph_2012}. This final step is for completeness demonstrated explicitly in Sec.~\ref{sec:explicit_xz_univ}. Note that we keep all the physical qubits initially stabilized in $\ket{+}$. The input state is produced in the preparation stage of the computation that starts from these stabilized states and then applies real-valued gates.

As a final remark, there are only three regular tilings of the two-dimensional plane \cite{van_den_nest_universal_2006} and our results show that to each of them there is naturally a principal plane of the Bloch sphere assigned for which the corresponding MBQC is universal: the rectangular tiling when understood as the underlying graph of the resource graph state is universal for $XY$-plane-only MBQC \cite{mantri_universality_2017}. The triangular tiling is universal for $XZ$-plane-only (real) MBQC \cite{mhalla_graph_2012}, and finally the hexagonal tiling is universal for $YZ$-plane-only MBQC. Note that when no restrictions on the measurement plane are made the three tilings were shown to be universal in Ref.~\cite{van_den_nest_universal_2006}.

\section{Conclusion}
\label{sec:conclusion}

We extend the discussion on the measurement patterns with measurements restricted to the $YZ$-plane of the Bloch sphere and study its relations to the Parity Architecture. For the former, the exhaustive discussion is performed for the patterns with either gflow or Pauli flow in terms of four relevant features, such as the form of the underlying graph or whether the inputs coincide with the outputs. We improve some results of Ref.~\cite{smith_parity_2024} and demonstrate among others that any uniformly deterministic measurement pattern whose inputs coincide with the outputs must necessarily have all the non-output qubits measured in the $YZ$-plane and its underlying graph must be of the register-logic form, defined in the text. The measurement plane for such patterns is therefore dictated by the mere requirement that the computation is uniformly deterministic. Departing from the uniform determinism, we show in two different ways that universal quantum computation is possible when only $YZ$-plane measurements are utilized. This work can therefore be seen as closing the line of research initiated by Ref.~\cite{mantri_universality_2017} for universal $XY$-plane-only MBQC and continued by Ref.~\cite{mhalla_graph_2012} for universal $XZ$-plane-only MBQC. A possible avenue for further research is to consider even more general notions of determinism \cite{mhalla_shadow_2025}.

The register-logic graphs with $YZ$-plane measurements fit naturally into the framework of the Parity Architecture. The non-outputs in the graph represent parities of the outputs and the LHZ encoding can be used to embed the register-logic graph with pairwise parities into a larger graph that features only short-range interactions. The resulting layout fits into planar chip designs or, as is the case for photonic platforms, requires single-photon sources with only nearest-neighbor connectivity, compare the layouts in Refs.~\cite{bourassaBlueprintScalablePhotonic2021,Bartolucci2023,Gliniasty2024}. Translating some of the ideas of the Parity Twine framework \cite{klaver_parity_2026,dreier_connectivity-aware_2025} into the MBQC domain leads to the introduction of the beveled cluster. While the beveled cluster still possesses only short-range edges, its inputs are distinct from the outputs and so the inputs can be measured in the initial stages of the computation. This might be desirable for hardware platforms with short coherence times where qubits cannot be kept in memory throughout the entire run of the quantum algorithm. For photonic platforms, for example, reducing the time from the source to the measurement increases the robustness of the platform to photon loss \cite{Bartolucci2023}. Multiple beveled clusters can be concatenated to create a measurement pattern that keeps the aforementioned desirable features and allows for universal computation. Though lying beyond the scope of the present work, the Parity Twine framework can be translated in its entirety into the MBQC picture. This opens the door to studying the resource-optimization properties of the framework within the measurement-based domain. 

\emph{Acknowledgements:} The authors thank Mehdi Mhalla and Isaac Smith for valuable discussions, and Berend Klaver and Christophe Goeller for insightful comments on the manuscript. This study was supported by the Austrian Research Promotion Agency (FFG Project No. FO999937388, FFG Basisprogramm).

\bibliography{refs}

\appendix

\section{MBQC overview}
\label{sec:prelims}

The measurement-based quantum computation (MBQC) in the formulation of the one-way computer \cite{raussendorf_one-way_2001} consists of single-qubit measurements done on a highly-entangled quantum state of many qubits. This so-called \emph{graph state} is characterized by the underlying graph $G = (V, E)$, where $V$ is the set of qubits playing the role of nodes, and $E$ is the set of edges that determine the pairs of qubits that are subjected to entangling $\cz$ gates. The graph state is then a stabilizer state defined as the unique eigenstate $\ket{\psi}$ of the equations $K_u \ket{\psi} = \ket{\psi}$ for each $u \in V$, where $K_u = X_u \prod_{w \in \neigh{u}} Z_w$ is the \emph{elementary stabilizer} of qubit $u$ and where by $\neigh{u}$ one denotes the \emph{neighborhood} of qubit $u$, which does not contain $u$.

To account for the fact that the input qubits of the computation might be in an arbitrary state, one introduces the \emph{open graph}, which is a triple $(G, I, O)$, where $I \subseteq V$ and $O \subseteq V$ are the sets of input and output qubits, respectively. Note that the two sets $I$ and $O$ need not be disjoint and that we always assume that each connected component of $G$ contains at least one input or output. The qubits in $I$ are not associated with an elementary stabilizer and the quantum state given by the open graph, which we still refer to as a graph state, is defined by stabilizers $\{ K_u \}_{u \in \Bar{I}}$ of only the non-input qubits $\Bar{I} \coloneqq V \setminus I$. Each non-output qubit $u \in \Bar{O}\coloneqq V \setminus O$ is in the course of the computation measured in some basis that is either a Pauli basis, or a rotated basis that lies in one of the principal planes of the Bloch sphere. For a more convenient description, the notion of \emph{labeled open graph} is defined as the quadruple $(G, I, O, \lambda)$, where $\lambda: \Bar{O} \to \{ X, Y, Z, XY, XZ, YZ \}$ is a mapping that to each measured qubit assigns the measurement plane or Pauli basis. The measurement-based execution of a particular algorithm is characterized by a \emph{measurement pattern}, which is a sequence of several types of commands, essentially describing the creation of the initial graph state, subsequent adaptive single-qubit measurements as well as the correction Pauli operators. For a more rigorous definition see e.g.~Ref.~\cite{backens_there_2021}.

Since Pauli gates are involutive, i.e., an even number of, say, $Z$ gates meeting at the same qubit cancel each other out, one cares only about an odd number of applications of a Pauli operator. This leads to the notion of the \emph{odd neighborhood}, which is for a given set of nodes $S \subseteq V$ defined as
\begin{equation}
    \nodd{S} = \{ v \in V: |\neigh{v} \cap S| = 1 \pmod{2} \},
\end{equation}
where $|\neigh{v} \cap S|$ stands for the number of elements of set $\neigh{v} \cap S$. Note that the odd neighborhood $\nodd{S}$ does \emph{not} have to be disjoint with $S$. Further note that for any two subsets $A, B \subseteq V$, the symmetric difference $\Delta$ of their odd neighborhoods satisfies $\nodd{A \Delta B} = \nodd{A} \Delta \nodd{B}$.

Each single-qubit measurement leads to two possible outcomes, one of which is considered desirable since no subsequent correction is necessary. The other, undesirable, outcome entails a correction Pauli operator that has to be applied to some not-yet-measured qubits. The exact strategy how to apply the correction operators is embodied by various notions of \emph{flow}, some of which are defined below. In the following definitions, $\mathcal{P}(\Bar{I})$ stands for the power set of the set of non-input qubits $\Bar{I}$.

\begin{defi}[\textbf{Gflow}; adapted from \cite{backens_there_2021}\label{def:gflow}]
A labeled open graph $(G, I, O, \lambda)$ has \emph{generalised flow} (or \emph{gflow}) if there exists a map $g: \Bar{O} \to \mathcal{P}(\Bar{I})$ and a strict partial order $\prec$ over $V$ such that for all $u \in \Bar{O}$ the following properties hold:
\begin{enumerate}[a)]
    \item $\forall v \in g(u): u \neq v \implies u \prec v$ \label{itm:g1}
    \item $\forall v \in \nodd{g(u)}: u \neq v \implies u \prec v$ \label{itm:g2}
    \item $\lambda(u) = XY \implies u \not\in g(u) \wedge u \in \nodd{g(u)}$ \label{itm:lam_xy}
    \item $\lambda(u) = XZ \implies u \in g(u) \wedge u \in \nodd{g(u)}$
    \item $\lambda(u) = YZ \implies u \in g(u) \wedge u \not\in \nodd{g(u)}$ \label{itm:lam_yz}
\end{enumerate}
\end{defi}

Note that the existence of gflow for a given labeled open graph is equivalent to the ability to run a measurement pattern on that graph that is strongly, uniformly, and step-wise deterministic \cite{browne_generalized_2007}. In the main text, we focus on the uniform part of determinism but the other two aspects, strong and step-wise determinism, are still assumed. When the requirements on the type of determinism are relaxed, more general correction strategies are allowed that are captured by the existence of Pauli flow \cite{browne_generalized_2007}, defined next.

\begin{defi}[\textbf{Pauli flow}; adapted from \cite{simmons_relating_2021,mcelvanney_complete_2023}\label{def:pauli_flow}]
    Given a labeled open graph $\Gamma = (G, I, O, \lambda)$, a \emph{Pauli flow} for $\Gamma$ is a tuple $(p, \prec)$ of a map $p: \Bar{O} \to \mathcal{P}(\Bar{I})$ and a strict partial order $\prec$ over $V$ such that for all $u \in \Bar{O}$ the following properties hold:
    \begin{enumerate}[a)]
        \item $\forall v \in p(u)\setminus\{u\}: \lambda(v) \not\in \{ X, Y \} \implies u \prec v$ \label{itm:pauli_flow_corr_order}
        \item $\forall v \in \nodd{p(u)}\setminus\{u\}: \lambda(v) \not\in \{ Y, Z \} \implies u \prec v$ \label{itm:pauli_flow_nodd_order}
        \item $\forall v \in V\setminus\{u\}: \lambda(v) = Y \wedge \neg (u \prec v) \implies (v \in p(u) \Leftrightarrow v \in \nodd{p(u)})$ \label{itm:pauli_flow_p_iff_oddp}
        \item $\lambda(u) = XY \implies u \not\in p(u) \wedge u \in \nodd{p(u)}$ \label{itm:pauli_flow_xy}
        \item $\lambda(u) = XZ \implies u \in p(u) \wedge u \in \nodd{p(u)}$ \label{itm:pauli_flow_xz}
        \item $\lambda(u) = YZ \implies u \in p(u) \wedge u \not\in \nodd{p(u)}$ \label{itm:pauli_flow_yz}
        \item $\lambda(u) = X \implies u \in \nodd{p(u)}$ \label{itm:pauli_flow_x}
        \item $\lambda(u) = Y \implies (u \not\in p(u) \wedge u \in \nodd{p(u)}) \lor (u \in p(u) \wedge u \not\in \nodd{p(u)})$ \label{itm:pauli_flow_y}
        \item $\lambda(u) = Z \implies u \in p(u)$ \label{itm:pauli_flow_z}
    \end{enumerate}
\end{defi}

As discussed in Ref.~\cite{mhalla_finding_2008}, for any given open graph and its gflow, one can partition the qubits in $V$ into pairwise disjoint subsets $V_k^\prec$ that cover $V$, such that
\begin{align}
    V_0^\prec & \coloneqq \mathrm{max}_\prec (V), \\
    V_{k+1}^\prec & \coloneqq \mathrm{max}_\prec (V \setminus \Tilde{V}^\prec_{k}), \quad k \geq 0, \label{eq:v_layers}
\end{align}
where $\mathrm{max}_\prec (X) = \{ u \in X | \forall v \in X: \neg (u \prec v) \}$ for set $X$, and $\Tilde{V}^\prec_k \coloneqq \bigcup_{i=0,\ldots,k} V^\prec_i$. Then for a given open graph and its two given gflows $(g, \prec)$ and $(g', \prec')$, one says that $(g, \prec)$ is \emph{more delayed} than $(g', \prec')$ if $|\Tilde{V}^\prec_k| \geq |\Tilde{V}^{\prec'}_k|$ for all $k$ and there exists $k_0$ for which $|\Tilde{V}^\prec_{k_0}| > |\Tilde{V}^{\prec'}_{k_0}|$ \cite{mhalla_finding_2008}. With that we can finally state the following definition.

\begin{defi}[\textbf{Maximally delayed gflow}, \cite{mhalla_finding_2008}]
    A gflow $(g, \prec)$ is \emph{maximally delayed} if there exists no gflow of the same open graph that is more delayed.
\end{defi}

Note that if a labeled open graph has a gflow, it also has a maximally delayed gflow \cite{mhalla_finding_2008,backens_there_2021}. Since we assume that every connected component of the open graph contains at least one input or output qubit, by the first part of the proof of Lemma~C.2 in Ref.~\cite{backens_there_2021} any gflow satisfies $V_0^\prec \subseteq O$ with equality $V_0^\prec = O$ holding for a maximally delayed gflow. For $V_1^\prec$, the following lemma applies:

\begin{lem}[Lemma~C.4 of Ref.~\cite{backens_there_2021}]
    \label{thm:lemma_c_4}
    If $(g, \prec)$ is a maximally delayed gflow of a labeled open graph $(G, I, O, \lambda)$, then $V^\prec_1 = V^{\prec, XY}_1 \cup V^{\prec, XZ}_1 \cup V^{\prec, YZ}_1$, where
    \begin{align}
        V^{\prec, XY}_1 & = \{ u \in \Bar{O}| \lambda(u) = XY \wedge \exists K \subseteq O \setminus I: \nonumber \\
        & \qquad \qquad \qquad \nodd{K} \setminus O = \{ u \} \}, \\
        V^{\prec, XZ}_1 & = \{ u \in \Bar{O}| \lambda(u) = XZ \wedge \exists K \subseteq O \setminus I: \nonumber \\
        & \qquad \qquad \qquad \nodd{K \cup \{ u \}} \setminus O = \{ u \} \}, \\
        V^{\prec, YZ}_1 & = \{ u \in \Bar{O}| \lambda(u) = YZ \wedge \exists K \subseteq O \setminus I: \nonumber \\
        & \qquad \qquad \qquad \nodd{K \cup \{ u \}} \setminus O = \emptyset \}.
    \end{align}
\end{lem}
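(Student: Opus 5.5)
The plan is to prove the two inclusions separately. I work with the layer decomposition $V_k^\prec$ and use the fact quoted just before the lemma: for a maximally delayed gflow, $V_0^\prec = O$. For any $x$, let $d(x)$ denote the index $k$ with $x \in V_k^\prec$.

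\emph{Inclusion $V_1^\prec \subseteq V_1^{\prec,XY}\cup V_1^{\prec,XZ}\cup V_1^{\prec,YZ}$.}
1. Let $u \in V_1^\prec$. Since $V_0^\prec = O$, we have $u \in \Bar{O}$, so $g(u)$ is defined.
2. By definition, $u$ is maximal in $V \setminus V_0^\prec$. Hence every $v$ with $u \prec v$ lies in $V_0^\prec = O$.
3. By conditions a) and b) of Def.~\ref{def:gflow}, this gives $g(u)\setminus\{u\} \subseteq O$ and $\nodd{g(u)}\setminus\{u\} \subseteq O$.
4. Set $K \coloneqq g(u)\setminus\{u\}$. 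Since $g(u) \subseteq \Bar{I}$, we get $K \subseteq O\setminus I$.
5. Now read off the plane conditions c)–e):
   \begin{itemize}
   \item If $\lambda(u)=XY$, then $g(u)=K$ and $u \in \nodd{K}$, so $\nodd{K}\setminus O = \{u\}$.
   \item If $\lambda(u)=XZ$, then $g(u)=K\cup\{u\}$ and $u\in\nodd{K\cup\{u\}}$, so $\nodd{K \cup \{u\}}\setminus O=\{u\}$.
   \item If $\lambda(u)=YZ$, then $u\notin \nodd{K\cup\{u\}}$, so $\nodd{K \cup \{u\}}\setminus O=\emptyset$.
   \end{itemize}
This step is routine.

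\emph{Reverse inclusion.} I argue by contradiction with maximal delay.
1. Suppose $u\in\Bar{O}$ satisfies one of the three conditions with some witness $K\subseteq O\setminus I$, but $u\notin V_1^\prec$. Since $u \notin O = V_0^\prec$, this means $d(u)\ge 2$.
2. Define a new correction set: $g'(u) \coloneqq K$ in the $XY$ case, and $g'(u) \coloneqq K\cup\{u\}$ in the $XZ$ and $YZ$ cases. Set $g'(w)=g(w)$ for all $w\neq u$.
3. Check that $g'(u)\subseteq \Bar{I}$. In the $XZ$ and $YZ$ cases we need $u\notin I$. This holds because the original gflow already requires $u\in g(u)\subseteq\Bar I$ for these planes, so such a $u$ cannot be an input.
4. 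Define a new order through depths: $d'(u)=1$, $d'(w)=d(w)$ for $w\neq u$, and $w\prec' v \iff d'(w)>d'(v)$. This is clearly a strict partial order.
5. Check the gflow conditions for $u$. Every element of $(g'(u)\cup\nodd{g'(u)})\setminus\{u\}$ lies in $O$, hence has depth $0<1$. The plane conditions c)–e) hold by the choice of $K$.
6. Check the gflow conditions for $w\neq u$. First note that $\prec$ itself implies $d(w)>d(v)$ whenever $w\prec v$, since layers are built from maxima. So the original conditions give $d(w)>d(v)$ for all relevant $v \in g(w)\cup\nodd{g(w)}$. The only depth that changed is that of $u$, and it decreased. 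Hence $d'(w)>d'(v)$ still holds, including when $v = u$.
7. Compare layers. The layers of $\prec'$ are exactly those given by $d'$. Therefore $\Tilde V^{\prec'}_k=\Tilde V^{\prec}_k\cup\{u\}$ for $1\le k<d(u)$, and the two coincide otherwise. In particular, at $k=1$ the cumulative set strictly grows, so $(g',\prec')$ is more delayed than $(g,\prec)$. This contradicts maximal delay.

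\emph{Main obstacle.} The step needing the most care is the verification in step 6 that the depth-based order $\prec'$ still satisfies conditions a) and b) for all other vertices. This rests on the observation that $w\prec v$ forces $d(w)>d(v)$, together with the fact that only $u$'s depth was lowered. A secondary subtlety is the input check in step 3 for the $XZ$ and $YZ$ cases.
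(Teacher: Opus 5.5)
Your proof is correct. The paper does not reprove this lemma; it imports it from Backens et al., and your argument is essentially the standard one used there (Mhalla--Perdrix's argument extended to all three planes). For one inclusion you read the plane conditions off $g(u)\setminus\{u\}\subseteq V_0^\prec=O$; for the other you move $u$ into layer~1 and contradict maximal delay.

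There is one small slip in step~7. If $V^\prec_{d(u)}=\{u\}$, the value $d(u)$ is no longer attained by $d'$, so the $\prec'$-layers above it shift down by one. For $k\ge d(u)$ you then get $\tilde V^{\prec'}_k=\tilde V^{\prec}_{k+1}\supseteq\tilde V^{\prec}_k$ rather than equality. This only strengthens the required inequality, so the contradiction with maximal delay still goes through.
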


\section{$YZ$-plane patterns with gflow}
\label{sec:cond_discussion}

In this section we prove the claims of Sec.~\ref{sec:yz_mbqc_with_gflow} and study the mutual relation of the four features introduced there. The full discussion is made of these features with explicit counterexamples presented where applicable. 

\begin{thm}[Restatement of Thm.~\ref{thm:io_g_to_rl_yz}]
    Let $(G, I, O, \lambda)$ be a labeled open graph with $I = O$ and with gflow. Then $(G, I, O)$ is of the register-logic form and all non-outputs are measured in the $YZ$-plane.
\end{thm}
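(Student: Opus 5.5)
The proof is a direct induction on the gflow's partial order; Lemma~\ref{thm:lemma_c_4} is not needed. The key consequence of $I = O$ is that every correction set lives among the measured qubits: $g(u) \subseteq \Bar{I} = \Bar{O}$ for all $u \in \Bar{O}$. Call a non-output $u$ \emph{good} if $\lambda(u) = YZ$ and $\neigh{u} \cap \Bar{O} = \emptyset$, i.e.\ $u$ has no non-output neighbours. The theorem says exactly that every non-output is good.

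Fix a gflow $(g, \prec)$ and suppose some non-output is not good. Since $V$ is finite, we can choose a counterexample $u \in \Bar{O}$ that is $\prec$-maximal among the non-good non-outputs. Then every non-output $v$ with $u \prec v$ is good.

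Write $g(u) = S \cup (g(u) \cap \{u\})$ with $S \coloneqq g(u) \setminus \{u\}$.
\begin{enumerate}[(i)]
\item By property~\ref{itm:g1} of Def.~\ref{def:gflow}, every $v \in S$ satisfies $u \prec v$. Since $S \subseteq \Bar{O}$, every such $v$ is good, so all its neighbours are outputs. Hence $\nodd{S} \subseteq O$, and in particular $u \notin \nodd{S}$.
\item By linearity of the odd neighbourhood,
\begin{equation*}
\nodd{g(u)} \cap \Bar{O} = \bigl(\nodd{g(u) \cap \{u\}} \cap \Bar{O}\bigr) \,\Delta\, \bigl(\nodd{S} \cap \Bar{O}\bigr).
\end{equation*}
The second term is empty by (i), and $\nodd{\{u\}} = \neigh{u}$. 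So $\nodd{g(u)} \cap \Bar{O}$ equals $\neigh{u} \cap \Bar{O}$ if $u \in g(u)$, and is empty otherwise.
\end{enumerate}

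Next I show that $u$ has no non-output neighbours. Suppose $w \in \neigh{u} \cap \Bar{O}$. Then $u \in g(u)$, since otherwise $\nodd{g(u)} \cap \Bar{O}$ would be empty by (ii). Consequently $w \in \nodd{g(u)}$, and $w \neq u$. Property~\ref{itm:g2} then gives $u \prec w$, so $w$ is good. But $u$ is a non-output neighbour of $w$, which contradicts the goodness of $w$. Hence $\neigh{u} \cap \Bar{O} = \emptyset$.

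It remains to pin down the measurement plane. We have $u \notin \neigh{u}$ (no self-loops) and $u \notin \nodd{S}$ by (i), so $u \notin \nodd{g(u)}$. This excludes $\lambda(u) = XY$ and $\lambda(u) = XZ$, both of which require $u \in \nodd{g(u)}$. Since gflow labels are planes only, $\lambda(u) = YZ$, and property~\ref{itm:lam_yz} is then also consistent. Thus $u$ is good, a contradiction, so all non-outputs are good.

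Goodness of every non-output gives both claims at once: $\lambda \equiv YZ$, and $\Bar{O}$ induces no edges, so $(G, I, O)$ is of the register-logic form.

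The main obstacle is choosing the right induction hypothesis. The hypothesis must include "no non-output neighbours" and not just "measured in $YZ$", because it is this neighbourhood condition on later qubits that kills the contributions of $S$ to the odd neighbourhood. The step to double-check is the $u \notin g(u)$ case. There the $YZ$ label is impossible, and the argument shows $u \notin \nodd{g(u)}$, which rules out $XY$ as well, so the case cannot occur and is consistent with the conclusion.
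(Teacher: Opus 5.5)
Your proof is correct, and it takes a genuinely different route from the paper's.

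The paper passes to a maximally delayed gflow and uses Lemma~\ref{thm:lemma_c_4} together with Lemma~C.3 of Ref.~\cite{backens_there_2021} to compute the layers. With $I=O$ the only admissible $K$ is $\emptyset$, so $V_1^\prec$ consists exactly of the $YZ$-measured qubits with $\neigh{u}\subseteq O$. Re-applying the lemma with outputs $O\cup V_1^\prec$ then shows $V_2^\prec=\emptyset$.

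You instead work with an arbitrary gflow. You take a $\prec$-maximal bad qubit, which amounts to a backward induction along $\prec$. You use only conditions a)--e) and the fact that $g(u)\subseteq\Bar{I}=\Bar{O}$.

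What your route buys:
\begin{itemize}
\item It is self-contained: it needs neither the existence of maximally delayed gflows nor the layer lemmas.
\item It never uses the standing assumption that every connected component contains an input or output. The paper states it needs that assumption only for Lemma~\ref{thm:lemma_c_4}, to get $V_0^\prec=O$. So your argument proves the theorem without that hypothesis.
\end{itemize}
The paper's route gives the ``single time step'' depth statement directly. You recover it by combining your conclusion with the gflow constructed in Theorem~\ref{thm:rl_yz_impl}.

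One presentational point. The sentence ``$u\in g(u)$, since otherwise $\nodd{g(u)}\cap\Bar{O}$ would be empty'' does not justify the claim by itself, because emptiness is not contradictory. The actual reason is that $u\notin\nodd{g(u)}$ holds in every case, by (i) and the absence of self-loops. That excludes $XY$ and $XZ$, so $\lambda(u)=YZ$, and then condition e) forces $u\in g(u)$. You do say this in your closing paragraph. It would read more cleanly to fix the measurement plane first and then run the neighbour argument.
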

\begin{proof}
    If the pattern has gflow, then it also has a maximally delayed gflow $(g, \prec)$. From Lemma~\ref{thm:lemma_c_4} the condition $I = O$ implies that the only subset $K \subseteq O \setminus I$ is the empty set, for which $\nodd{K} \setminus O  = \emptyset$ and thence $V^{\prec, XY}_1 = \emptyset$. Similarly, $V^{\prec, XZ}_1 = \emptyset$ since $\nodd{\emptyset \cup \{ u \}} = \neigh{u}$ and the neighborhood of $u$ does not contain $u$ by definition. Hence, $V^{\prec}_1 = V^{\prec, YZ}_1 = \{ u \in \Bar{O} | \lambda(u) = YZ \wedge \neigh{u} \setminus O = \emptyset \}$. All qubits in $V^{\prec}_1$ are therefore such that their neighborhood fully lies in $O$. One can now use Lemma~C.3 in Ref.~\cite{backens_there_2021} and consider the restriction of $g$ to $V \setminus \tilde{O}$, where $\tilde{O} \coloneqq O \cup V_1^\prec$ is the new set of outputs, which yields again a maximally delayed gflow $(\tilde{g},\tilde{\prec})$. Applying Lemma~\ref{thm:lemma_c_4} one more time lets us compute $V^\prec_2$ by considering the three conditions as follows: 
    \begin{align}
        \nodd{K} \setminus \tilde{O} = \{ u \}, \\
        \nodd{K \cup \{ u \}} \setminus \tilde{O} = \{ u \}, \\
        \nodd{K \cup \{ u \}} \setminus \tilde{O} = \emptyset,
    \end{align}
    for $K \subseteq V^\prec_1$ and $u \in V \setminus \tilde{O}$. To see that the first condition never takes place, recall that $\nodd{K} \subseteq K \cup \bigcup_{v \in K} \neigh{v} \subseteq \tilde{O}$ and so $\nodd{K} \setminus \tilde{O} = \emptyset$. Furthermore, $u \not\in \tilde{O}$ and so also $u \not\in K$. One can therefore write $\nodd{K \cup \{ u \}} = \nodd{K \Delta \{ u \}} = \nodd{K} \Delta \nodd{ \{ u \} } = \nodd{K} \Delta \neigh{u}$, where again $\nodd{K} \subseteq \tilde{O}$ and so $\nodd{K \cup \{ u \}} \setminus \tilde{O} = \neigh{u} \setminus \tilde{O}$. The right-hand side of the last equation does not contain $u$ and so the second condition above cannot be satisfied. Finally, the third condition can be satisfied only when $\neigh{u} \subseteq \tilde{O}$ and so $u$ can have neighbors among the outputs as well as qubits in $V^\prec_1$. But any neighbors of qubits in $V^\prec_1$ lie in $O$ and so must $u$, which is a contradiction. The only possibility left is thus that $\neigh{u} \subseteq O$, where $\lambda(u) = YZ$. But all such qubits $u$ are already included in $V^\prec_1$, which is again a contradiction. To conclude, $V^\prec_2 = \emptyset$. Note that if $V^{\prec}_k = \emptyset$ for some $k$, then $V^{\prec}_l = \emptyset$ for all $l \geq k$. This implies that all other sets $V^{\prec}_k = \emptyset$ for $k \geq 2$ and the entire graph consists only of the outputs $O = V^\prec_0$ and the nodes that form $V^\prec_1$ and that are all measured in $YZ$, while their neighborhoods are fully contained in $O$. The open graph is therefore of the register-logic form. This conclusion is obviously valid irrespective of the fact that the maximally delayed gflow has been used instead of the original gflow. Note that this proof, specifically the use of Lemma~\ref{thm:lemma_c_4}, is the only place in this work where we use the assumption that any connected component of $G$ has at least one input or output.
\end{proof}

\begin{thm}[Restatement of Thm.~\ref{thm:rl_yz_impl}]
    Let $\Gamma = (G, I, O, \lambda)$ be a labeled open graph with $I \subseteq O$, where $(G, I, O)$ is of the register-logic form and all non-outputs are measured in the $YZ$-plane. Then $\Gamma$ has gflow.
\end{thm}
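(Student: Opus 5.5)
The plan is to write down the gflow explicitly, choosing the trivial correction set for every measured qubit. For each non-output $u \in \Bar{O}$ set $g(u) \coloneqq \{u\}$. Define the strict partial order $\prec$ on $V$ by $u \prec v$ if and only if $u \in \Bar{O}$ and $v \in O$. This is irreflexive and transitive, since chains of length two cannot occur: if $u \prec v$ then $v \in O$, and nothing is above an output. So it is a two-layer order, with all non-outputs below all outputs.

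First, check that $g$ is well defined as a map into $\mathcal{P}(\Bar{I})$. Because $I \subseteq O$, every $u \in \Bar{O}$ lies in $\Bar{I}$, so $\{u\} \subseteq \Bar{I}$. This is the only place the hypothesis $I \subseteq O$ enters.

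Next, verify the conditions of Def.~\ref{def:gflow} in turn.
\begin{enumerate}[a)]
    \item Condition (a) is vacuous, since $g(u) = \{u\}$ contains no $v \neq u$.
    \item For condition (b), note that $\nodd{\{u\}} = \neigh{u}$. A vertex $v$ lies in $\nodd{\{u\}}$ exactly when $|\neigh{v} \cap \{u\}| = 1$, that is, when $v$ is adjacent to $u$. By the register-logic form, $\Bar{O}$ induces no edges, so $\neigh{u} \subseteq O$. Hence every $v \in \nodd{g(u)}$ is an output, and therefore $u \prec v$.
    \item Conditions (c) and (d) are vacuous because $\lambda \equiv YZ$.
    \item For condition (e), we need $u \in g(u)$, which holds trivially, and $u \notin \nodd{g(u)} = \neigh{u}$, which holds because the neighborhood of $u$ excludes $u$ and $G$ has no self-loops.
\end{enumerate}

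No step is a genuine obstacle. The only point needing care is condition (b), which uses the register-logic form to guarantee $\neigh{u} \subseteq O$. Without that property, an edge between two non-outputs $u, w$ would force both $u \prec w$ and $w \prec u$, contradicting strictness. It is also worth remarking that edges inside $O$ are irrelevant to the argument, consistent with the bRL discussion earlier in the paper.
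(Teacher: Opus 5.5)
Your proof is correct and is essentially the paper's proof: the same correction function $g(u)=\{u\}$, the same two-layer order with every non-output below every output, and the same use of the register-logic form to get $\nodd{g(u)}=\neigh{u}\subseteq O$. Your definition of the order ($u \prec v$ iff $u\in\Bar{O}$ and $v\in O$) is slightly more careful than the paper's phrasing ``$v \prec u$ iff $u\in O$'', which read literally would give $u \prec u$ for outputs $u$ and so fail irreflexivity.
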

\begin{proof}
    The proof follows essentially the first part of the proof of Theorem~1 in the supplementary of Ref.~\cite{smith_parity_2024}, where the condition of the graph being bipartite with respect to $I$ is replaced by the RL form and the condition $I \subseteq O$ is assumed on top. Namely, let us define $g(v) \coloneqq \{ v \}$ for each $v \in \Bar{O}$ and $v \prec u$ iff $u \in O$. Then $g: \Bar{O} \to \mathcal{P}(\Bar{I})$ as required and $\prec$ is obviously a strict partial order. Furthermore, $(g, \prec)$ satisfies condition \ref{itm:g1}) of the gflow definition vacuously. Since all neighbors of $v$ lie in $O$, it follows that $\nodd{g(v)} = \neigh{v} \subseteq O$ and so $v \not\in \nodd{g(v)}$. Condition \ref{itm:lam_yz}) is thus satisfied and so is condition \ref{itm:g2}) as for each $u \in \nodd{g(v)}$ it holds that $v \prec u$. The pair $(g, \prec)$ is therefore a valid gflow.
\end{proof}

\begin{figure}
    \centering
    \includegraphics[width=0.95\linewidth]{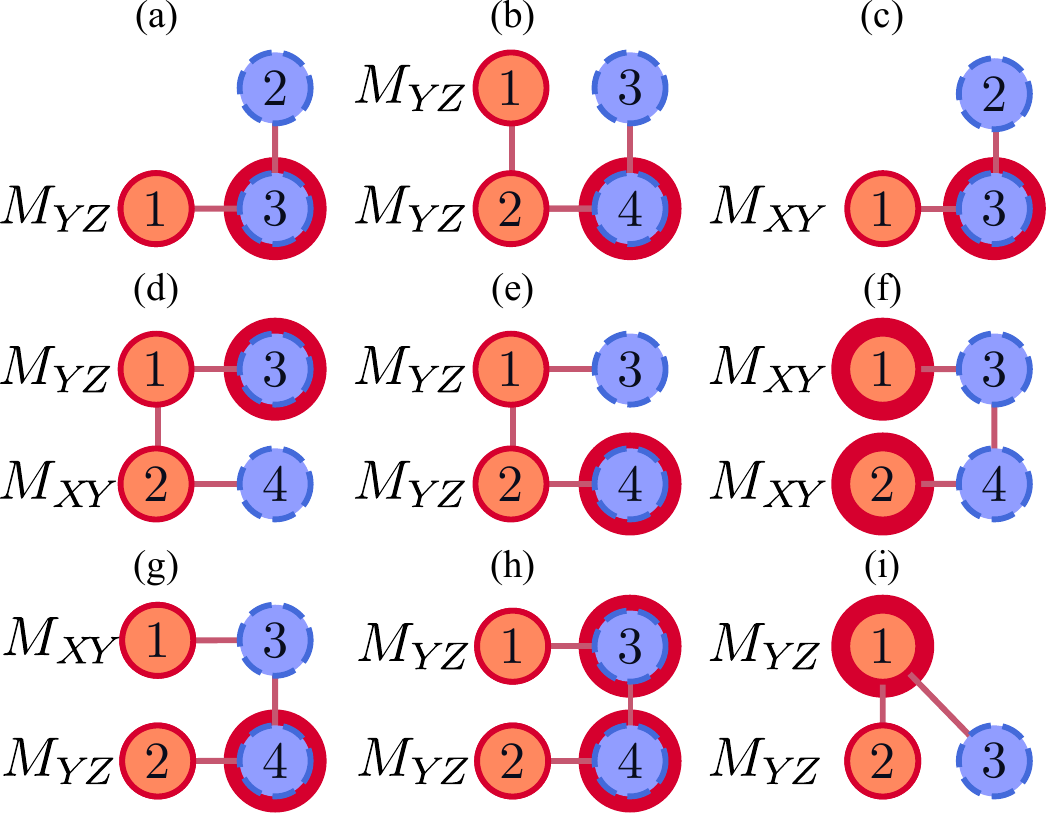}
    \caption{%
    Examples and counterexamples of labeled open graphs. The graphics follows the convention of Fig.~\ref{fig:parity_arch} except that the number in each qubit is merely an index, not a parity label. %
    (a) Graph with $I \subsetneq O$ and gflow given by $g(1) = \{ 1 \}$ and $1 \prec 3$. %
    (b) Graph with $I \subsetneq O$ that is not of the RL form and has no gflow. One can also remove qubit 3 and consider a similar graph with $I = O$, which also has no gflow. %
    (c) Graph of the RL form with $I \subseteq O$ and no gflow. A valid gflow would have to satisfy $3 \in g(1)$ which is not allowed as qubit 3 is an input. The same conclusion holds when considering the same graph but with qubit 2 removed and so $I = O$. %
    (d) Graph with $I \subsetneq O$ and gflow given by $g(2) = \{4\}$, $g(1) = \{1\}$ with $1 \prec 2 \prec 3,4$. %
    (e) Graph with $I \subsetneq O$ and gflow that is not of the RL form. The gflow reads $g(1) = \{ 1 \}$, $g(2) = \{ 2,3 \}$ with $1 \prec 2 \prec 3, 4$. %
    (f) Graph of the RL form with gflow, where $I \cap O = \emptyset$. The gflow reads $g(1) = \{ 3 \}$, $g(2) = \{ 4 \}$ with $1,2 \prec 3,4$. %
    (g) Pattern with $I \subsetneq O$ and gflow, where some maximal elements are not measured in the $YZ$-plane. The gflow reads $g(1) = \{3\}$, $g(2) = \{2\}$ with $1,2 \prec 3,4$. %
    (h) Graph with $I = O$ and gflow that is not bipartite with $I$ forming one part. The gflow is given by $g(1) = \{1\}$, $g(2) = \{2\}$ with $1, 2 \prec 3, 4$. %
    (i) Bipartite graph with $|I| = |O|$ and $I$ forming one part, yet with no gflow as the inputs that are not also outputs must be measured in the $XY$-plane. %
    }
    \label{fig:counter_gflow}
\end{figure}

Note that in Theorem~\ref{thm:rl_yz_impl}, in general the inputs form only a subset of the outputs. In Fig.~\ref{fig:counter_gflow}(a) there is an example of a labeled open graph that complies with the theorem and that has $I \subsetneq O$. The same example illustrates that Lemma~\ref{thm:inputs_gflow} really implies $I \subseteq O$, not necessarily the equality of the two sets.

As mentioned in the main text, one can identify four features of labeled open graphs $(G, I, O, \lambda)$ that are relevant when discussing $YZ$-only patterns. Namely: $I \subseteq O$; $\lambda \equiv YZ$; $(G, I, O)$ is of the RL form; and $(G, I, O, \lambda)$ has gflow. It is easy to convince oneself that any of the four conditions alone does not imply any of the others. Let us discuss in the following if any pair of these conditions imply the other two, and if any triple of them implies the remaining one. For the pairs we obtain these conclusions:
\begin{itemize}
    \item Assuming $I \subseteq O \wedge \lambda \equiv YZ$: Neither of the two remaining conditions is implied; for two counterexamples with $I = O$ and $I \subsetneq O$ refer to Fig.~\ref{fig:counter_gflow}(b). In both cases, for a valid gflow it must be that qubit 2 satisfies $2 \in g(2)$ due to condition~\ref{itm:lam_yz}). But then necessarily $1 \in \nodd{g(2)}$ irrespective of the exact form of $g(2)$ and so $2 \prec 1$ due to condition~\ref{itm:g2}). At the same time, for qubit 1 it must analogously hold that $1 \in g(1)$ and since the inputs cannot lie in correction sets then also $4 \not\in g(1)$. This implies that $2 \in \nodd{g(1)}$ and condition~\ref{itm:g2}) leads to $1 \prec 2$, which is a contradiction and so no gflow exists. %
    \item Assuming $I \subseteq O \wedge RL$: Neither $\lambda \equiv YZ$ nor the existence of gflow is implied; for two counterexamples with $I = O$ and $I \subsetneq O$ see Fig.~\ref{fig:counter_gflow}(c).
    \item Assuming $I \subseteq O \wedge \text{gflow}$: For $I = O$ Theorem~\ref{thm:io_g_to_rl_yz} implies that both conditions $\lambda \equiv YZ$ and $RL$ are satisfied. For $I \subsetneq O$ this is no longer true as witnessed by the counterexample in Fig.~\ref{fig:counter_gflow}(d).
    \item Assuming $\lambda \equiv YZ \wedge RL$: Neither $I \subseteq O$ nor the existence of gflow is implied. For a counterexample consider a labeled open graph of the RL form where the non-outputs are measured in $YZ$, while the inputs form a subset of $\Bar{O}$. This pattern obviously does not satisfy $I \subseteq O$ and since the inputs for a graph with gflow must be measured in $XY$-plane, see Lemma~\ref{thm:inputs_gflow}, the graph just considered has no gflow.
    \item Assuming $\lambda \equiv YZ \wedge \text{gflow}$: Lemma~\ref{thm:inputs_gflow} implies condition $I \subseteq O$. However, condition $RL$ does not have to hold as demonstrated in the counterexample in Fig.~\ref{fig:counter_gflow}(e).
    \item Assuming $RL \wedge \text{gflow}$: Neither $I \subseteq O$ nor $\lambda \equiv YZ$ is implied; for a counterexample see Fig.~\ref{fig:counter_gflow}(f).
\end{itemize}
The pair-wise conditions are in most cases too weak to imply the others. For the triples the situation somewhat differs:
\begin{itemize}
    \item Assuming $\lambda \equiv YZ \wedge RL \wedge \text{gflow}$: Condition $I \subseteq O$ is implied by Lemma~\ref{thm:inputs_gflow}.
    \item Assuming $I \subseteq O \wedge \lambda \equiv YZ \wedge RL$: The existence of gflow is implied by Theorem~\ref{thm:rl_yz_impl}.
    \item Assuming $I \subseteq O \wedge \lambda \equiv YZ \wedge \text{gflow}$: For $I = O$ this triple of assumptions implies condition $RL$ due to Theorem~\ref{thm:io_g_to_rl_yz}. However, for $I \subsetneq O$ this implication does not have to hold, see Fig.~\ref{fig:counter_gflow}(e).
    \item Assuming $I \subseteq O \wedge RL \wedge \text{gflow}$: For $I = O$, Theorem~\ref{thm:io_g_to_rl_yz} implies condition $\lambda \equiv YZ$. For $I \subsetneq O$ this condition is not implied, see Fig.~\ref{fig:counter_gflow}(g).
\end{itemize}

As mentioned in Sec.~\ref{sec:yz_mbqc_with_gflow} of the main text, Theorem~\ref{thm:mod_isaac_thm_1} makes two modifications to the original version, Theorem~2 in Ref.~\cite{smith_parity_2024}. There the claim is that a simple connected labeled open graph $(G,I,O,\lambda)$ with $|I| = |O|$ and using only $YZ$-plane measurements has gflow if and only if $G$ is bipartite with $I$ forming one part. The counterexample in Fig.~\ref{fig:counter_gflow}(h) shows that the graph need not be bipartite with respect to $I$ if the assumptions $\vert I \vert = \vert O \vert$ and existence of gflow are fulfilled. For the other implication, assuming only $\vert I \vert = \vert O \vert$ is too weak as shown in the counterexample in Fig.~\ref{fig:counter_gflow}(i). 

These observations lead to the modified claim in Theorem~\ref{thm:mod_isaac_thm_1} where we consider a more general class of graphs, the RL graphs, and instead of $|I| = |O|$ assume $I = O$ right away. For the implication from left to right the assumption $I = O$ can indeed be relaxed to $\vert I \vert = \vert O \vert$ because of Lemma~\ref{thm:inputs_gflow}. This leads to another way of modifying the original theorem \footnote{Private communication with Isaac Smith, one of the authors of Ref.~\cite{smith_parity_2024}.}: Consider a labeled open graph without fixed outputs $(G,I,\lambda\equiv YZ)$ and with no edges between the inputs. Then \emph{there exists a choice of outputs} $O$ with $\vert I \vert = \vert O \vert$ such that there exists a gflow if and only if $G$ is bipartite with respect to $I$.

\section{Explicit LHZ triangle correspondence}
\label{sec:cluster_to_lhz}

The same correspondence as discussed in Sec.~\ref{sec:parity_architecture} can also be shown more explicitly by invoking the rule for $X$ basis measurements \cite{hein_entanglement_2006,hein_multi-party_2004}. For the measurement of qubit $a$ one also needs to choose one of its neighbors $b_0$ to describe the resulting graph state. For a given qubit $a$ in the triangular cluster segment, one can always choose its ``north'' neighbor for $b_0$. The explicit sequence of steps for the LHZ triangle with five base qubits is demonstrated in Fig.~\ref{fig:cluster_to_lhz_5}. Note that the $X$ basis measurement also entails an extra Hadamard and $Z$ operators applied to some qubits. These operators can be merged with the $XY$-plane measurements that are to be performed as a part of the measurement pattern of the LHZ computation. It is precisely these extra Hadamards that turn $XY$-plane measurements into the $YZ$-plane measurements, reproducing thus the results of Ref.~\cite{smith_parity_2024}.

\begin{figure*}
    \centering
    \includegraphics[width=\linewidth]{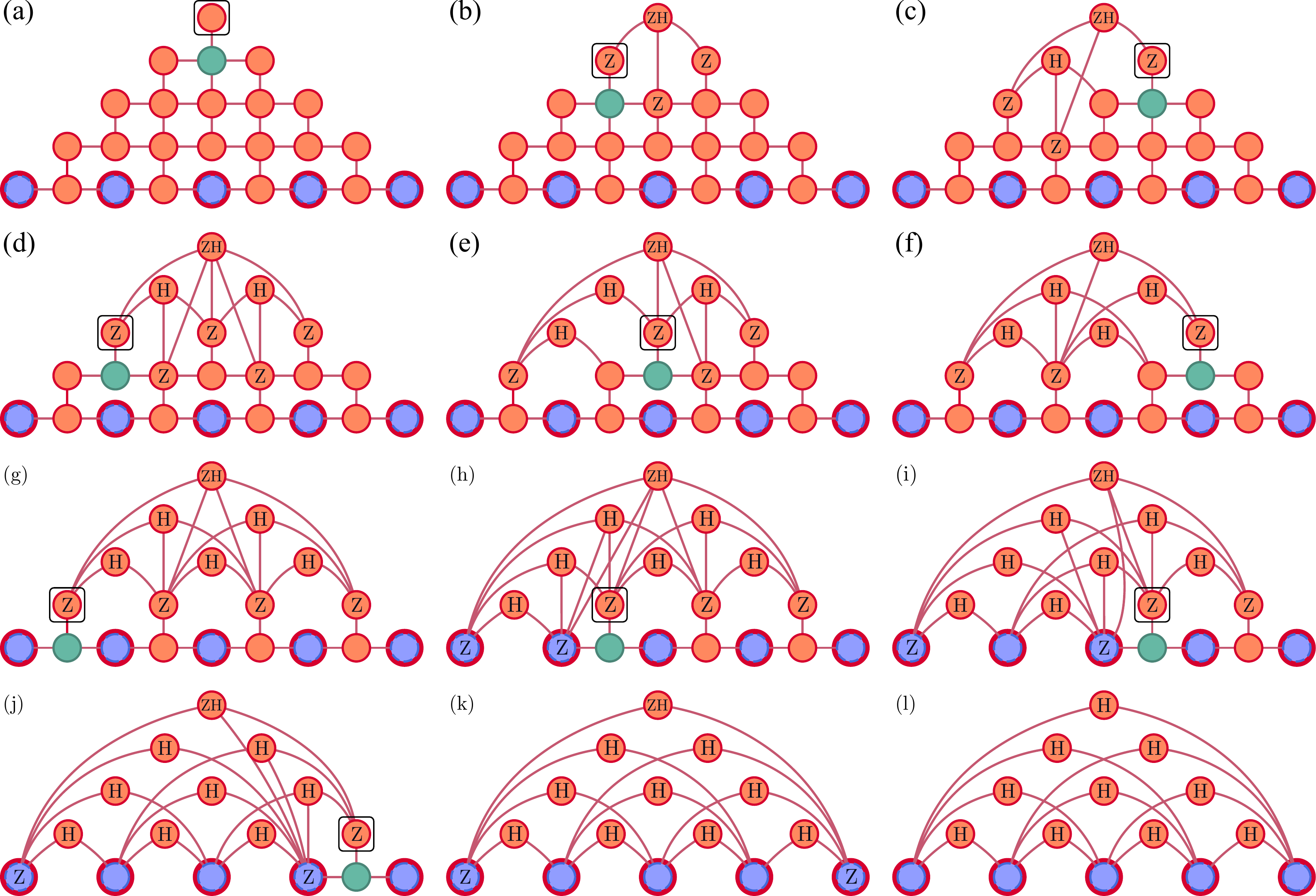}
    \caption{Series of steps that turn a triangular patch of a cluster state into the LHZ triangle. The overall modification consists in measuring all qubits in one sublattice in the $X$ basis. These measurements are taken to be $\ket{+}$ projections for convenience. The $\ket{-}$ projections would correspond to extra Pauli gates applied to some qubits that can be merged into the subsequent measurements. Each step except the last one is the $X$ basis measurement of the qubit highlighted in green in the previous step, while the qubit in the square is the corresponding $b_0$ qubit, see the text. The last step is the application of the stabilizer of the top-most qubit. The operators written in some nodes represent the extra gates applied to the given qubits on top of the $\cz$ gates.}
    \label{fig:cluster_to_lhz_5}
\end{figure*}

\section{$YZ$-plane patterns with Pauli flow}
\label{sec:pauli_flow_pushed}

Note that the inputs of any measurement pattern that has Pauli flow and that are not simultaneously outputs must be measured in the $XY$-plane -- either in the Pauli or rotated basis \cite{mitosek_algebraic_2024} (for gflow this is proven in Refs.~\cite{hamrit_reversibility_2015, backens_there_2021}). For patterns where all non-outputs are measured in the $YZ$-plane we therefore obtain the following statement. Recall that $\lambda \subseteq YZ$ refers to the case where all non-outputs are measured in the rotated $YZ$-plane basis, $Y$ basis, or $Z$ basis. 

\begin{lem}[Restatement of Lemma~\ref{thm:inputs_pauli_flow}]
    Any labeled open graph $(G, I, O, \lambda)$ with Pauli flow and $\lambda \subseteq YZ$ must be such that each input is either measured in the $Y$ basis or belongs simultaneously to the outputs.
\end{lem}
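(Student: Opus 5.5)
The plan is to argue directly from the Pauli flow conditions in Def.~\ref{def:pauli_flow}, using only the fact that correction sets live in $\mathcal{P}(\Bar{I})$. Take an input $u \in I \setminus O$. Since $\lambda \subseteq YZ$, the label $\lambda(u)$ is one of $YZ$, $Y$, or $Z$. The task is to rule out $\lambda(u) = YZ$ and $\lambda(u) = Z$, leaving only $\lambda(u) = Y$.

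\textbf{Step 1 (exclude $YZ$ and $Z$).} If $\lambda(u) = YZ$, condition~\ref{itm:pauli_flow_yz}) requires $u \in p(u)$. If $\lambda(u) = Z$, condition~\ref{itm:pauli_flow_z}) requires the same. In both cases this contradicts $p(u) \subseteq \Bar{I}$, because $u \in I$. This immediately yields the claim.

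\textbf{Step 2 (consistency check for $Y$).} For completeness, I would note that a $Y$ label is not excluded. Condition~\ref{itm:pauli_flow_y}) allows the branch $u \notin p(u) \wedge u \in \nodd{p(u)}$, which is compatible with $u \notin \Bar{I}$. This matches the general fact quoted from \cite{mitosek_algebraic_2024}: inputs that are not outputs must be measured in the $XY$-plane, and $Y$ is the only label that lies in both $XY$ and $YZ$. Alternatively, the lemma could be obtained directly from that fact, since $\{YZ, Y, Z\} \cap \{XY, X, Y\} = \{Y\}$. I prefer the self-contained argument above.

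I do not expect a real obstacle. The only subtle point is to apply the requirement $p: \Bar{O} \to \mathcal{P}(\Bar{I})$ to the input qubit itself, rather than only to its neighbours, and to treat the Pauli $Z$ label separately from the rotated $YZ$ label, since Pauli flow distinguishes them.
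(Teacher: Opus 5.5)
Your proposal is correct and takes the same route as the paper: because $p(u)\subseteq\Bar{I}$, an input $u$ can never satisfy $u\in p(u)$, so conditions f) and i) of the Pauli flow definition exclude the $YZ$ and $Z$ labels and leave only $Y$. The paper words this as first showing that inputs must lie in the $XY$-plane (which also excludes $XZ$) and then reads off the lemma, but the key step is the same as yours.
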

\begin{proof}
    For completeness we prove that all inputs must be measured in the $XY$-plane, the actual claim is a trivial consequence. Since the correction function $p$ maps to the non-inputs, for any input $u$ it holds that $u \not\in p(u)$ and so $u$ cannot be measured in any rotated basis in the $XZ$ and $YZ$ planes nor in the $Z$ basis. This follows from conditions~\ref{itm:pauli_flow_xy} through \ref{itm:pauli_flow_z} of the definition of Pauli flow.
\end{proof}

\begin{figure}
    \centering
    \includegraphics[width=.7\linewidth]{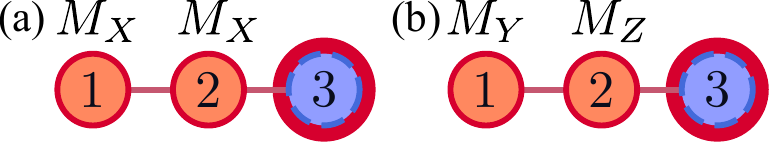}
    \caption{Counterexamples with Pauli flow. The graphics follows the convention of Fig.~\ref{fig:counter_gflow}. %
    (a) Graph with $I = O$ and Pauli flow that is neither of the RL form nor are non-outputs measured in the $YZ$-plane. Pauli flow reads $p(1) = \{2\}$, $p(2) = \{1\}$ with $1 \prec 3$.
    (b) Labeled open graph with $I = O$ and Pauli flow, where all non-outputs are measured in the $YZ$-plane and yet the graph is not of the RL form. Pauli flow reads $p(1) = \{2\}$, $p(2) = \{1,2\}$ with $1 \prec 2 \prec 3$. %
    }
    \label{fig:counter_pauli}
\end{figure}

As discussed in Sec.~\ref{sec:yz_mbqc_with_gflow}, the uniform determinism of a measurement pattern, captured by the notion of gflow, puts very stringent constraints on the form of the underlying open graph. When instead of gflow a more general notion of Pauli flow is considered and $\lambda \equiv YZ$ is replaced by $\lambda \subseteq YZ$, the three theorems derived in Sec.~\ref{sec:yz_mbqc_with_gflow} for gflow must be reanalyzed. Theorem~\ref{thm:io_g_to_rl_yz} does not hold anymore, cf.~Fig.~\ref{fig:counter_pauli}(a), while Theorem~\ref{thm:rl_yz_impl} remains valid since any gflow is a special case of a Pauli flow. The implication from right to left in Theorem~\ref{thm:mod_isaac_thm_1} still holds for the same reason, while the converse implication holds no more, refer to the counterexample in Fig.~\ref{fig:counter_pauli}(b).

For completeness, in the rest of this section the correction strategy of the pattern in Sec.~\ref{sec:univ_yz_pattern} is presented, where all the Pauli measurements are made in the very first time step and the rest of computation consists of the rotated measurements of qubits $q_2$ in each unit cell. Since no confusion with parity labels can arise here, instead of $q_j$ we refer to a given qubit merely by its numerical index $j$ from now on. 
The sign of the rotation angle in general depends on previous measurement results. A graph state is defined by the elementary stabilizers $K_j$ for each non-input qubit $j$, see Sec.~\ref{sec:prelims}. The correction strategy in general consists in the application of the correction stabilizer $Q_j$ to qubit $j$ to counteract the undesirable measurement outcome. The correction stabilizers for each non-output qubit of the unit cell read as follows:
\begin{itemize}
    \item $Q_1$ applies $Z$ to qubit 1 and reads $Q_1 = K_3 Q_3$.
    \item $Q_2$ applies $X$ to qubit 2 and reads $Q_2 = K_2 Q_3$.
    \item $Q_3$ depends on the exact choice of Pauli measurements in the unit cell and is studied in detail below.
    \item $Q_4$ applies $X$ to qubit 4 and reads $Q_4 = K_4$ irrespective of whether 4 is measured in $Y$ or $Z$.
    \item $Q_5$ applies $X$ to qubit 5 and reads $Q_5 = K_5$ irrespective of whether 5 is measured in $Y$ or $Z$.
\end{itemize}
As for the correction stabilizer of qubit 3, it applies $Z$ to 3 and so includes $K_6$. This elementary stabilizer not only affects the other measured qubits of the cell but also has neighbors in other cells. A more detailed discussion is therefore necessary and is summarized in the table below, where the first three columns denote the measurement basis of qubits 4, 5, and $5'$, respectively, where the primed numbers denote the corresponding qubits of the unit cell that lies to the north of the cell in question. If the latter is already on the border, no $5'$ exists and the expressions below remain the same except that $5'$ is excluded. The last two columns then show the corresponding correction stabilizer $Q_3$:

\begin{center}
    \begin{tabular}{c|c|c|l|l}
     $M^{(4)}$ & $M^{(5)}$ & $M^{(5')}$ & $Q_3$ stabilizer & explicitly (phase ignored) \\ \hline
     Y & Y & Y & $K_4 K_5 K_{5'} K_6$ & $Y_4 Y_5 Y_{5'} Y_6 Z_3 Z_{6'} Z_7$ \\
     Y & Y & Z & $K_4 K_5 K_6$ & $X_6 Y_4 Y_5 Z_3 Z_{5'} Z_{7}$ \\
     Y & Z & Y & $K_4 K_{5'} K_6$ & $X_6 Y_4 Y_{5'} Z_3 Z_5 Z_{6'}$ \\
     Y & Z & Z & $K_4 K_6$ & $Y_4 Y_6 Z_3 Z_5 Z_{5'}$ \\
     Z & Z & Y & $K_{5'} K_6$ & $Y_{5'} Y_6 Z_3 Z_4 Z_5 Z_{6'}$ \\
     Z & Z & Z & $K_6$ & $X_6 Z_3 Z_4 Z_5 Z_{5'}$
\end{tabular}
\end{center}

Note that combinations of measurements $(Z, Y, Z)$ and $(Z, Y, Y)$ are not used for the implementation of universal gates and so do not have to be considered for the correction strategy. The stabilizers presented above are such that their individual Pauli terms either lie on the output qubits 6 and 7, or the term coincides with the Pauli that is being measured (except of course for the qubit that is being corrected). As a result, such a stabilizer does not affect the measurement results. The only exception is the stabilizer for qubit 1, whose term that lies on qubit 2 is $Z$ and so flips the measurement angle of that qubit. So far we discussed only a single unit cell with its north neighbor. For this entire correction strategy to work also for the entire sheet of cells, one has to show that the Paulis that accumulate on the outputs of one cell propagate easily through the subsequent cells. The correction stabilizers for such a sheet can be constructed iteratively from the outputs. The last column of cells has the correction strategy exactly as explained above. For the second-to-last column, there might be any combination of the $X$, $Y$, or $Z$ Pauli operators on the outputs of the first column of cells. Moreover, each output qubit from the first column also has an extra neighboring qubit to the east, which is not present for a single unit cell considered above and which represents qubit 3 of the subsequent cell. This extra neighbor gets $Z$ correction whenever the associated output gets either $X$ or $Y$ correction. One can thus cancel it by applying the correction stabilizer for qubit 3 of the subsequent cell. The $Z$ correction on the output can be removed by the correction stabilizer of the output when understood as the input of the next cell. The $Y$ correction gets merged with the $Y$ measurement so no adjustment is necessary. Finally, if the $X$ correction is to be applied, one applies the correction stabilizer for the output as well to turn the $X$ correction into the $Y$ one.

\section{Universal $XZ$-plane pattern}
\label{sec:explicit_xz_univ}

\begin{figure}
    \centering
    \includegraphics[width=\linewidth]{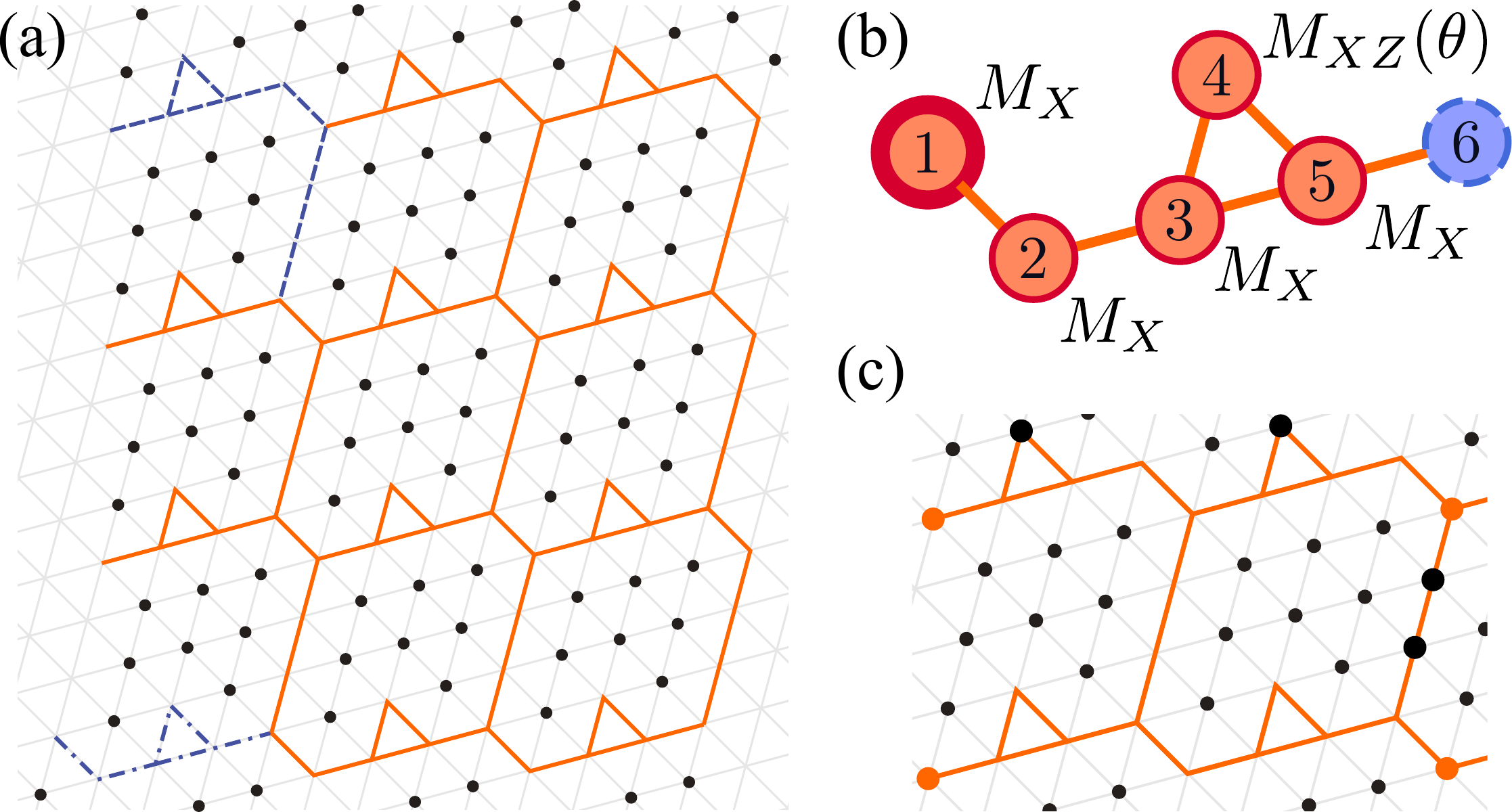}
    \caption{Universal real-valued MBQC in the $XZ$ plane on a triangular grid. Qubits marked black are measured in the $Z$ basis. (a) The resulting graph state that serves as a universal resource can be understood as the tessellation of two elementary subgraphs, highlighted by dashed and dot-dashed lines, respectively. (b) Second ``thorn''-shaped subgraph with measurement bases shown and qubits indexed for convenience, where 1 is the input and 6 the unmeasured output. (c) The patch of the graph state that implements $\cz$. The two left-most and two right-most orange (gray) dots denote the inputs and outputs, respectively. The vertical edges adjacent to the very top and bottom of the pattern shown are removed by measuring the qubits above or below in the $Z$ basis.}
    \label{fig:trig_grid_univ}
\end{figure}

The universal real-valued quantum computation can be driven by single-qubit measurements in the $XZ$ plane made on qubits that form a regular triangular grid \cite{mhalla_graph_2012}. Here we present a somewhat streamlined version of the original argument. The universality is to be understood such that the measurements implement a set of gates that are \emph{computationally universal} \cite{aharonov_simple_2003}. Such gates might not be \emph{strictly universal}, which means that the subgroup generated by them is not necessarily dense in $\mathrm{SU}(2^n)$ for $n$ qubits. These gates can nevertheless efficiently simulate a strictly universal set of gates \cite{aharonov_simple_2003}.

Consider a regular triangular grid, where many qubits are measured in the $Z$ basis to arrive at the sparse graph state depicted with colored lines in Fig.~\ref{fig:trig_grid_univ}(a). The underlying graph of this state can be seen as being tiled by two kinds of elementary subgraphs that are highlighted with dashed and dash-dotted lines. The latter subgraph resembles the rotated ``thorn'' rune, see Fig.~\ref{fig:trig_grid_univ}(b), and implements different single-qubit gates depending on the measurement angle $\theta$ of the apex qubit with index 4. To see how, one can first apply the local-complementation rule for graph states to the apex qubit \cite{hein_entanglement_2006,mhalla_graph_2012} that effectively removes the edge between qubits 3 and 5 and turns this way the thorn-shaped state into a linear cluster, where 1 and 2 are measured in the $X$ basis, qubits 3 and 5 in the $Y$ basis, and qubit 4 in the $XY$-plane. The effect of such measurements on this linear cluster is easy to compute with the conclusion that for any real $\theta$, the gate applied to the input qubit reads $R_Y (\theta - \pi/2) H$, which up to some Pauli corrections reduces to $\ident$ and $H$ for $\theta = 0$ and $\theta = \pi/2$, respectively. The subgraph highlighted in dashed lines can be seen as being made out of the thorn and the vertical part that consists of four qubits connected by three edges. When the two qubits in the middle are measured in the $Z$ basis, the overall operation of this vertical segment is the identity. When they are measured in the $X$ basis, the effective action is the $\cz$ gate. One has to be a little careful in the case of $\cz$, however, as the corresponding linear cluster is not attached to the ends of the two neighboring thorns. For this reason, it is easier to consume not two but four cells of the graph state to implement a single $\cz$ as demonstrated in Fig.~\ref{fig:trig_grid_univ}(c). Almost all qubits in the associated graph state are measured in the $X$ basis. Note that if also the apexes of the two bottom thorns are measured in the $Z$ basis, the resulting operation is CNOT. As noted in Ref.~\cite{mhalla_graph_2012}, $\{ H, P(\alpha), \cz \}$ with $\alpha \in [0, 2 \pi)$ represent a universal set of gates for real-valued quantum computation, where $P(\alpha) = X R_Y(\alpha)$. Since $X = H R_Y(\pi/2)$, the universal set can be chosen as $\{ H, R_Y(\alpha), \cz \}$. The pattern discussed above can implement all these gates, thus proving its universality for real-valued quantum computing and hence also for the complex-valued one \cite{mhalla_graph_2012,aharonov_simple_2003}.

\end{document}